\documentclass[12pt]{article}
\usepackage{enumerate}
\usepackage{amsfonts}
\usepackage{latexsym}
\usepackage{color}
\usepackage{graphicx}
\usepackage{wrapfig}
\usepackage{caption}
\usepackage{float}

\usepackage{amsmath,amssymb,amsthm}
\usepackage{bm}
\usepackage{braket}

\def\D{\mathcal{D}}
\def\HH{\mathcal{H}}

\def\St{\mathfrak{S}}
\def\P{\mathfrak{P}}

\def\C{\mathfrak{C}}
\def\T{\mathfrak{T}}
\def\B{\mathfrak{B}}

\newcommand{\supp}{\mathrm{supp}}
\newcommand{\rank}{\mathrm{rank}}

\newcommand{\id}{\mathrm{Id}}
\newcommand{\Tr}{\mathrm{Tr}}
\newcommand{\ext}{\mathrm{ext}}

\newcommand{\shs}{\hspace{1pt}}
\newcounter{defin}  \newcounter{lemma}  \newcounter{theorem}
\newcounter{proposition} \newcounter{corol}  \newcounter{remark} \newcounter{example}
\newenvironment{lemma}{\par\refstepcounter{lemma}     \textbf{Lemma \thelemma.} }{\rm\par}
\newenvironment{theorem}{\par\refstepcounter{theorem}     \textbf{Theorem \thetheorem.}\ }{\rm\par}
\newenvironment{proposition}{\par\refstepcounter{proposition}     \textbf{Proposition \theproposition.}\ }{\rm\par}
\newenvironment{corollary}{\par\refstepcounter{corol}     \textbf{Corollary \thecorol.} }{\rm\par}

\newenvironment{remark}{\par\refstepcounter{remark}     \textbf{Remark \theremark.}}{\rm\par}
\newenvironment{example}{\par\refstepcounter{example}     \textbf{Example \theexample.}}{\rm\par}

\begin{document}

\title{The Moreau-Yosida approximation of the EoF: basic properties and accuracy estimates}




\author{M.E.~Shirokov\footnote{email:msh@mi.ras.ru}\\
Steklov Mathematical Institute, Moscow, Russia}
\date{}
\maketitle
\begin{abstract} We describe a family of convex uniformly continuous functions $E^\lambda_F$, $\lambda>0$, on the set of states of a bipartite quantum system (consisting of finite-dimensional or infinite-dimensional subsystems), which monotonically increase and converge pointwise to the Entanglement of Formation (EoF) as $\lambda\to0$.  These functions are "nonselective" entanglement monotones defined in a way close to the construction of the Moreau-Yosida regularization (the Moreau envelope) of a convex function on a convex set used in the modern convex analysis.  So, we call the functions $E^\lambda_F$ the \emph{Moreau-Yosida approximations} of the EoF and describe their
equivalent definitions and basic properties.

The semicontinuity bounds for the EoF (presented in \cite{LCB+}) allow us to obtain easily computable upper bounds on the difference $E_F(\rho)-E^\lambda_F(\rho)$ for a given state $\rho$.  These bounds give easily computable bounds on the rate of uniform convergence of the function $E^\lambda_F$ to the EoF  as $\lambda\to0^+$ on the sets of states with bounded rank/energy of one of the marginal states.

We also discuss sufficient conditions for the coincidence of  $E_F(\rho)$ and $E^\lambda_F(\rho)$ at a given state $\rho$ for all $\lambda$ small enough and consider several classes of states for which such coincidence takes place.

The conjectured selective LOCC-monotonicity of the functions $E^\lambda_F$ and a possible way to prove it are briefly discussed.

Secondary parts of the article are written with the help of ChatGPT-5.6.
\end{abstract}

\tableofcontents

\section{Introduction}


The Entanglement of Formation (EoF) is one of the basic  measures of
entanglement for bipartite quantum systems \cite{P&V,4H}. For finite-dimensional
quantum systems $A$ and $B$, the EoF of a state $\rho$ of the composite system $AB$
is defined by the convex-roof construction applied to the pure-state
entanglement, which is given by the marginal von Neumann entropy $S(\varrho_A)=S(\varrho_B)$  of a pure
state $\varrho$ of the system $AB$. More explicitly,
\begin{equation}\label{EF-d}
E_F(\rho)=\inf_{\sum_k p_k\varrho_k=\rho}\sum_k p_kS([\varrho_k]_A),
\end{equation}
where the infimum is taken over all finite  ensembles
$\{p_k,\varrho_k\}$ of pure states in $\St(\HH_{AB})$ having $\rho$ as their
average state \cite{Bennett} ($\St(\HH_{AB})$ is the set of all states of the system $AB$).\smallskip

In the case of infinite-dimensional quantum systems $A$ and $B$, there are two ways to generalize the above definition: the discrete and  continuous
convex roof extension of the function
$\,\ext\St(\HH_{AB})\ni\rho\mapsto S(\rho_A)=S(\rho_B)\,$ to the set $\St(\HH_{AB})$, where $\ext\St(\HH_{AB})$ is the set of pure states in $\St(\HH_{AB})$. The former is defined by the formula in (\ref{EF-d}) in which the infimum is taken over all discrete
(finite or countable) ensembles of pure states in $\St(\HH_{AB})$, the latter is defined as
\begin{equation}\label{EF-c}
E_F(\rho)=\!\inf_{\int\varrho\mu(d\varrho)=\rho}\int\! S(\varrho_A)\mu(d\varrho),
\end{equation}
where the  infimum is over all Borel probability measures\footnote{It is reasonable to consider such measures as generalized ensembles of quantum states \cite{H-Sh-2,H-SCI}. From this point of view, a discrete ensemble of pure states is identified with a purely atomic (discrete) probability measure on $\ext\St(\HH_{AB})$.} on the set $\ext\St(\HH_{AB})$ with the barycenter $\rho$ (this infimum is always attained) \cite[Section 5]{EM}.\smallskip

The above definitions coincide if at least one of the marginal entropies of the state $\rho$ is finite \cite[Section 5]{EM}, but the global coincidence of these
definitions is an interesting open problem  (it is clear  that the r.h.s. of (\ref{EF-d}) is not less than the r.h.s. of (\ref{EF-c}) for any state $\rho$, but it is not clear how to prove its coincidence for any mixed states with infinite  marginal entropies).\smallskip

Despite the fact that the discrete definition seems more reasonable from the physical and computational  points of view  there are
serious reasons\footnote{One of the problems with the function defined in (\ref{EF-d}) is the lack of the proof of its vanishing on the set of  countably-non-decomposable separable states  with infinite  marginal entropies \cite{EM}.} to consider the function $E_F$  defined in (\ref{EF-c}) as a true version of the EoF \cite{EM, Lami-new}. Among these
reasons is the global lower semicontinuity of the function $E_F$ defined in (\ref{EF-c}), while the global lower semicontinuity of the discrete version of the EoF is equivalent to its coincidence with the continuous version  defined in (\ref{EF-c}). The lower semicontinuity of $E_F$  makes it possible to guarantee the validity of the basic properties of the entanglement measures (convexity, LOCC-monotonicity, etc.) for this function in the strengthened forms (which are necessary in the infinite-dimensional settings) \cite{EM,Lami-new}. So, in this article \emph{we assume that the EoF is defined by formula  (\ref{EF-c}) keeping in mind that this definition
coincide with (\ref{EF-d}) for any state  $\rho$ of a finite-dimensional bipartite system $AB$ and for any state  $\rho$ of an infinite-dimensional bipartite system $AB$ such that
$\,\min\{S(\rho_A),S(\rho_B)\}<+\infty$}.  \smallskip

An obvious drawback of the definition (\ref{EF-c}) is its technical complexity. It is known that the EoF is a hardly computable function with quite complex behaviour even in the
finite-dimensional case, but the necessity to take  the infimum in (\ref{EF-c}) over all measures on $\ext\St(\HH_{AB})$ with fixed
barycenter radically complicates the definition. Moreover, from the physical point of view, we are faced with the following problem: if $\rho$ is a state with finite
energy (w.r.t. some Hamiltonian), i.e. a  physically realizable state, the supports of some probability measures on $\ext\St(\HH_{AB})$ with the barycenter $\rho$ contain states with infinite
energy which can not be realized in physical experiments.\footnote{This cannot happen with a discrete ensemble of pure states that corresponds to a  probability measure on the set $\ext\St(\HH_{AB})$ with a finite or countable support.}

In this article, we consider a possible way to simplify the analytical and computational problems mentioned before. We describe the properties of the special approximation of the EoF constructed by the way analogous to the definition of the Moreau-Yosida regularization of a convex function widely used in the modern convex analysis \cite{MY-1,MY-2} (it is briefly described at the beginning of Section 3). More specifically, we define the family $\{E^{\lambda}_F\}_{\lambda>0}$ of functions  on the set $\St(\HH_{AB})$ with good analytical properties and more simple and physically relevant definition (compared to (\ref{EF-c})) such that
$$
E^{\lambda}_F(\rho)\nearrow E_F(\rho)\leq +\infty\quad\textrm{ as }\; \lambda\searrow0^+\;\textrm{ for all }\;\rho\in\St(\HH_{AB}).
$$

It is essential that there is a simple way to estimate the accuracy of the approximation of the EoF by the function $E^{\lambda}_F$. The semicontinuity bounds for the EoF (obtained in \cite{LCB+}) can be used to derive easily computable upper bounds on $\,E_F(\rho)-E^{\lambda}_F(\rho)\,$ for a given state $\rho$  depending on simple parameters of $\rho$  (the rank/energy of $\rho_X$, where $X$ is either $A$ or $B$). This allows us to estimate the rate of uniform convergence of the function $E_F^{\lambda}$ to the EoF as $\lambda\to0^+$  on the sets of states with bounded rank/energy of one of the marginal states. In a sense, \emph{the existence of such estimates reduces the problem of calculation of the EoF  with a given accuracy to the problem of
calculation of the function $E_F^{\lambda}$}, which is called the Moreau-Yosida approximation of the  EoF in this article.

\section{Preliminaries}

Throughout the article, we assume that $\mathcal{H}_X$ is a separable Hilbert space describing a quantum system $X$,
$\mathfrak{B}_{\rm sa}(\mathcal{H}_X)$ is the real Banach space of all Hermitian bounded operators on $\mathcal{H}_X$ with the operator norm $\|\cdot\|$ and $\mathfrak{T}(\mathcal{H}_X)$ is the
(complex) Banach space of all trace-class
operators on $\mathcal{H}_X$  with the trace norm $\|\!\cdot\!\|_1$. Write
$\mathfrak{S}(\mathcal{H}_X)$ for   the set of quantum states (positive operators
in $\mathfrak{T}(\mathcal{H}_X)$ with unit trace) \cite{H-SCI,Wilde}. Denote the set of pure states (one-rank projectors) in  $\mathfrak{S}(\mathcal{H}_X)$ by  $\ext\mathfrak{S}(\mathcal{H}_X)$.

Write $I_{X}$ for the unit operator on a Hilbert space
$\mathcal{H}_X$ and  $\id_{X}$ for the identity
transformation of the Banach space $\mathfrak{T}(\mathcal{H}_X)$.

The \emph{von Neumann entropy} of a quantum state
$\rho \in \mathfrak{S}(\HH)$ is  defined by the formula
$S(\rho)=\operatorname{Tr}\eta(\rho)$, where  $\eta(x)=-x\ln x$ if $x>0$
and $\eta(0)=0$. It is a concave lower semicontinuous function on the set~$\mathfrak{S}(\HH)$ taking values in~$[0,+\infty]$ \cite{H-SCI,W,Wilde}.\smallskip

A finite or
countable collection $\{\rho_{i}\}$ of states
with a probability distribution $\{p_{i}\}$ is conventionally called
\textit{discrete ensemble} and denoted by $\{p_{i},\rho_{i}\}$. The state
$\bar{\rho}\doteq\sum_{i}p_{i}\rho_{i}$ is called \emph{average state} of this  ensemble. \smallskip

A \textit{generalized (continuous) ensemble} is defined as
a Borel probability measure on the set of quantum states \cite{H-SCI,H-Sh-2}. We denote by $\P(\mathcal{H})$ the set of all Borel probability measures on $\ext\mathfrak{S}(\mathcal{H})$ equipped with the topology of weak convergence
\cite{Bog,Par}.\footnote{The weak convergence of a sequence $\{\mu_n\}\subset\P(\mathcal{H})$  to a measure $\mu_0\in\P(\mathcal{H})$ means that
$\,\lim_{n\rightarrow\infty}\int f(\rho)\mu_n(d\rho)=\int f(\rho)\mu_0(d\rho)\,$
for any continuous bounded function $f$ on $\,\St(\HH)$.}
 The set $\P(\mathcal{H})$ can be treated as a complete
separable metric space containing the dense subset $\P_{\rm d}(\mathcal{H})$ of discrete measures (corresponding to discrete ensembles) \cite{Par}. The average state of a generalized
ensemble $\mu \in \P(\mathcal{H})$ is the barycenter of the measure
$\mu $ defined by the Bochner integral
\begin{equation}\label{bd}
\bar{\rho}(\mu )=\int_{\ext\mathfrak{S}(\mathcal{H})}\rho \mu (d\rho ).
\end{equation}

Let $A$ and $B$ be quantum systems of any dimensions. We will use the function $S_A$ on the set $\P(\mathcal{H}_{AB})$ defined as
\begin{equation}\label{SA-def}
  S_A(\mu)\doteq\int_{\ext\St(\HH_{AB})}S(\rho_A)\mu(d\rho).
\end{equation}
It is easy to show that $S_A(\mu)$ is an affine lower semicontinuous function on $\P(\mathcal{H}_{AB})$ \cite{H-Sh-2}. Using this function the definition (\ref{EF-c}) can be rewritten as
\begin{equation}\label{EF}
  E_F(\rho)=\inf_{\mu\in\P_\rho(\mathcal{H}_{AB})}S_A(\mu)=\min_{\mu\in\P_\rho(\mathcal{H}_{AB})}S_A(\mu),
\end{equation}
where $\P_\rho(\mathcal{H}_{AB})$ is the subset of $\P(\HH_{AB})$ consisting of measures $\mu$ such that $\,\bar{\rho}(\mu)=\rho$.\medskip

Let $H$ be a positive (semi-definite)  operator on a Hilbert space $\mathcal{H}$ (we will always assume that positive operators are self-adjoint). Write  $\mathcal{D}(H)$ for the domain of $H$. For any positive operator $\rho\in\T(\HH)$ we will define the quantity $\Tr H\rho$ by the rule
\begin{equation}\label{H-fun}
\Tr H\rho=
\left\{\begin{array}{ll}
        \sup_n \Tr P_n H\rho\;\; &\textrm{if}\;\;  \supp\rho\subseteq {\rm cl}(\mathcal{D}(H))\\
        +\infty\;\;&\textrm{otherwise}
        \end{array}\right.
\end{equation}
where $P_n$ is the spectral projector of $H$ corresponding to the interval $[0,n]$ and ${\rm cl}(\mathcal{D}(H))$ is the closure of $\mathcal{D}(H)$. If
$H$ is the Hamiltonian (energy observable) of a quantum system described by the space $\HH$ then
$\Tr H\rho$ is the mean energy of a state $\rho$.

For any positive operator $H$ the set
$$
\C_{H,E}=\left\{\rho\in\St(\HH)\,|\,\Tr H\rho\leq E\right\}
$$
is convex and closed (since the function $\rho\mapsto\Tr H\rho$ is affine and lower semicontinuous). It is nonempty if $E> E_0$, where $E_0$ is the infimum of the spectrum of $H$.

The von Neumann entropy is continuous on the set $\C_{H,E}$ for any $E> E_0$ if and only if the operator $H$ satisfies  the \emph{Gibbs condition}
\begin{equation}\label{H-cond}
  \Tr\, e^{-\beta H}<+\infty\quad\textrm{for all}\;\,\beta>0
\end{equation}
and the supremum of the entropy on this set is attained at the \emph{Gibbs state}
\begin{equation*}
\gamma_H(E)\doteq \frac{e^{-\beta_H(E) H}}{\Tr e^{-\beta_H(E) H}},
\end{equation*}
where the parameter $\beta_H(E)$ is determined by the equation $\Tr H e^{-\beta H}=E\Tr e^{-\beta H}$,
and it is assumed that $\gamma_H(E)|\varphi\rangle=0$ for any $\varphi\in \D(H)^{\perp}$ \cite{W}.

In infinite-dimensions, condition (\ref{H-cond}) can be valid only if $H$ is an unbounded operator having  discrete spectrum of finite multiplicity. It means that
this operator has the spectral representation
  \begin{equation}\label{H-form}
H=\sum_{k=1}^{+\infty} h_k |\tau_k\rangle\langle\tau_k|,
\end{equation}
where $\mathcal{T}\doteq\left\{\tau_k\right\}_{k=1}^{+\infty}$ is the orthonormal
system of eigenvectors of $H$ corresponding to the \emph{nondecreasing} unbounded sequence $\left\{h_k\right\}_{k=1}^{+\infty}$ of its eigenvalues. In this case
$\D(H)=\{ \varphi\in\HH_\mathcal{T}\,| \sum_{k=1}^{+\infty} h^2_k |\langle\tau_k|\varphi\rangle|^2<+\infty\}$, where $\HH_\mathcal{T}$ of the linear span of $\mathcal{T}$.

For any positive operator $H$ satisfying condition (\ref{H-cond}) we will use the function
\begin{equation}\label{F-def}
F_{H}(E)\doteq\sup_{\rho\in\C_{H,E}}S(\rho)=S(\gamma_H(E))=\beta_H(E)E+\ln Z_H(E)
\end{equation}
on $[h_1,+\infty)$, where $Z_H(E)=\Tr e^{-\beta_H(E)H} $  and $h_1$ is the minimal eigenvalue of $H$.

By Proposition 1 in \cite{EC} the Gibbs condition (\ref{H-cond}) is equivalent to the following asymptotic property
\begin{equation}\label{H-cond-a}
  F_{H}(E)=o\shs(E)\quad\textrm{as}\quad E\rightarrow+\infty.
\end{equation}

For example, if $\,N\doteq a^\dag a\,$ is the number operator of a quantum oscillator then $F_{\hat{N}}(E)=g(E)$, where
\begin{equation}\label{g-def}
 g(x)=(x+1)\ln(x+1)-x\ln x,\;\, x>0,\qquad g(0)=0.
\end{equation}

We will often assume that
\begin{equation}\label{star}
  h_1=\inf\limits_{\|\varphi\|=1}\langle\varphi\vert H\vert\varphi\rangle=0.
\end{equation}

For a nonnegative function $f$ on $[0,1]$ we denote  its non-decreasing
envelope by $f^{\uparrow}$, i.e.
\begin{equation*}
  f^{\uparrow}(x)\doteq \sup_{t\in[0,x]} f(t),\quad x\in[0,1].
\end{equation*}
In particular, we will  use  the \emph{nondecreasing concave} continuous function
\begin{equation}\label{h+}
h^{\uparrow}(x)=\left\{\begin{array}{ll}
        h(x) &\textrm{if}\;\;  x\in[0,\frac{1}{2}]\\
        \ln2\quad& \textrm{if}\;\;  x\in(\frac{1}{2},1].
        \end{array}\right.
\end{equation}
where $\,h(x)=\eta(x)+\eta(1-x)\,$ is the binary entropy.

\section{The Moreau-Yosida approximation of the EoF}

\subsection{Definition and basic properties}

The Moreau-Yosida regularization (the Moreau envelope)
$M_{f}$ of a proper lower semi-continuous convex function
$f$ on a convex subset of a metric space is a smoothed version of $f$.  It was proposed by J.J.Moreau in 1965 \cite{MY-1} and widely used in optimization and variational analysis to deal with non-smooth functions.

The Moreau-Yosida regularization of order $\lambda>0$ of a proper convex function
$f$ from a Hilbert space $\HH$ (of any dimension) to $(-\infty,+\infty]$  is defined as (cf. \cite{MY-2})
$$
M^\lambda_{f}(\phi)=\inf _{\psi\in\HH}\left\{f(\psi)+\frac {1}{2\lambda}\|\psi-\phi\|^{2}\right\},\quad \phi\in\HH,
$$
where $\|\cdot\|$ is the norm in $\HH$. The function $M^\lambda_{f}$ has good analytical properties (convexity, Lipschitz continuity, etc.).
Under a certain condition (the lower semicontinuity of $f$) the  Moreau-Yosida regularization $M^\lambda_{f}$ pointwise converges to the function $f$. So, it can be treated
as a smooth approximation of $f$.\medskip

In this article we apply a slightly modified version of the above construction to the EoF (considered as a function on the set of bipartite quantum states).

Let $A$ and $B$ be quantum systems of any dimensions. For a given $\lambda>0$ consider the function
\begin{equation}\label{EFA}
\! E_F^\lambda(\rho)=\!\inf_{\sigma\in\St(\HH_{AB})}\!\left\{E_F(\sigma)+\frac{1}{2\lambda}\|\rho-\sigma\|_1\right\}
 =\!\inf_{\mu\in\P(\HH_{AB})}\!\left\{S_A(\mu)+\frac{1}{2\lambda}\|\rho-\bar{\rho}(\mu)\|_1\right\}\!
\end{equation}
on the set $\St(\HH_{AB})$, where $S_A$ is the function defined in (\ref{SA-def}) and $\bar{\rho}(\mu)$ is the barycenter of $\mu$ defined in (\ref{bd}).\footnote{$\bar{\rho}(\mu)$ can be treated as the average state of a generalized ensemble $\mu$ \cite{H-Sh-2}.} The second equality here is due the possibility to permute the infima (i.e. the property $\inf_{x\in X}\inf_{y\in Y}f(x,y)=\inf_{y\in Y}\inf_{x\in X}f(x,y)$).\smallskip

Since the definition of the function $E_F^\lambda$ is very close\footnote{It is essential that we use  $\|\rho-\sigma\|_1$ instead of  $\|\rho-\sigma\|^2_1$ in (\ref{EFA}).} to the definition of the  Moreau-Yosida regularization of a
convex function described before, we will call it the \emph{Moreau-Yosida approximation} (briefly, the \emph{$MY$-approximation}) of the EoF of order $\lambda$.\smallskip

In the following theorem we describe basic properties of the function $E_F^\lambda$.\smallskip

\begin{theorem}\label{P-1} \emph{Let $A$ and $B$ be quantum systems of any dimensions.}\smallskip

\noindent A) \emph{For  given $\lambda>0$ and a state $\rho$ in $\St(\HH_{AB})$ the following inequalities hold
$$
E_F^\lambda(\rho)\leq\frac{{\rm ed}(\rho)}{\lambda}\leq\frac{1}{\lambda},
$$
where $\,{\rm ed}(\rho)\doteq\frac{1}{2}\inf\{\|\rho-\sigma\|_1\,|\,\sigma\in\St_{\rm sep}(\HH_{AB})\}\,$ is the distance from the state $\rho$ to the set $\,\St_{\rm sep}(\HH_{AB})$
of separable states in $\,\St(\HH_{AB})$.}\smallskip

\noindent B) \emph{For a given $\lambda>0$ the function
$E_F^\lambda$ is convex and Lipschitz continuous on $\St(\HH_{AB})$ with the Lipschitz constant not exceeding $\frac{1}{\lambda}$:}\footnote{Following
 the convention we assume that the metric on $\St(\HH_{AB})$ is defined as $\,\frac{1}{2}\|\rho-\sigma\|_1$ \cite{H-SCI,Wilde}.}
$$
|E_F^\lambda(\rho)-E_F^\lambda(\sigma)|\leq\frac{1}{2\lambda}\|\rho-\sigma\|_1\qquad \forall \rho,\sigma\in\St(\HH_{AB}).
$$

\noindent C) \emph{For an arbitrary  state $\rho\in\St(\HH_{AB})$ the function  $\lambda \mapsto E_F^\lambda(\rho)$ is non-increasing on $(0,+\infty)$ and}
$$
E_F^\lambda(\rho)\nearrow E_F(\rho)\leq +\infty \qquad\textit{ as }\quad \lambda\searrow 0^+.
$$

\noindent D) \emph{The equality $\,E_F^\lambda(\rho)=E_F(\rho)\,$ holds for a given state $\rho\in\St(\HH_{AB})$  and  $\lambda=\lambda_\rho>0$ (and hence for all $\,\lambda\in (0,\lambda_\rho]\,$) if and only if}\footnote{Criteria for the
validity of (\ref{lsc}) are presented in Theorem 1 in \cite{H&Sh-5}.}
\begin{equation}\label{lsc}
E_F(\rho)-E_F(\sigma)\leq \frac{1}{2\lambda_\rho}\|\rho-\sigma\|_1\quad \forall\sigma\in\St(\mathcal{H}_{AB}).
\end{equation}

\noindent E) \emph{For every $\,\lambda>0$ the  equality  $\,E^\lambda_F(\rho)=0$  holds if and only if $\rho$ is a separable (non-entangled) state in $\St(\HH_{AB})$}.
\newpage

\noindent F) \emph{Let $\,\rho$ be a given state in $\St(\HH_{AB})$. If  $\,\Phi$ is a  transformation of $\,\St(\HH_{AB})$ such that
$$
E_F(\Phi(\sigma))\leq E_F(\sigma)\quad\textit{ and }\quad \|\Phi(\rho)-\Phi(\sigma)\|_1\leq \|\rho-\sigma\|_1
$$
for all $\sigma$ in $\St(\HH_{AB})$ then
$\,E^\lambda_F(\Phi(\rho))\leq E^\lambda_F(\rho)\,$
for all  $\lambda>0$.}\medskip

\noindent  G) \emph{Let $\,\rho$ be a given state in $\St(\HH_{AB})$. If $\,\Phi$ is a  transformation of $\,\St(\HH_{AB})$ such that
$$
\Phi(\rho)=\rho,\quad E_F(\Phi(\sigma))\leq E_F(\sigma)\quad\textit{ and }\quad \|\Phi(\rho)-\Phi(\sigma)\|_1\leq \|\rho-\sigma\|_1
$$
for all $\sigma$ in $\St(\HH_{AB})$ then for any $\lambda>0$ the first (resp. the second) infimum in (\ref{EFA}) can be taken over the set}
$$
\Phi(\St(\HH_{AB}))\doteq \left\{\Phi(\varsigma)\,|\, \varsigma\in\St(\HH_{AB})\right\}\;\; \textit{(resp. }\P_{\Phi}\doteq\left\{\mu\in\P(\HH_{AB})\,|\, \bar{\rho}(\mu)\in\Phi(\St(\HH_{AB}))\right\}\textit{).}
$$

\noindent H) \emph{For an arbitrary  state $\rho\in\St(\HH_{AB})$ and  $\lambda>0$ the second infimum in (\ref{EFA}) can be taken over all discrete (purely atomic) measures in $\,\P(\HH_{AB})$, i.e.
\begin{equation}\label{EFA+}
 E_F^\lambda(\rho)=\inf_{\{p_i,\rho_i\}\in\P_{\rm d}(\HH_{AB})}\left(\sum_i p_i S([\rho_i]_A)+\frac{1}{2\lambda}\|\rho-\bar{\rho}(\{p_i,\rho_i\})\|_1\right),
\end{equation}
where $\,\bar{\rho}(\{p_i,\rho_i\})\doteq\sum_i p_i \rho_i\,$ is the average state of an ensemble $\{p_i,\rho_i\}$ and  $\P_{\rm d}(\HH_{AB})$ is the set of all discrete ensembles of pure states in $\St(\HH_{AB})$. Moreover, if  $\rho$ is a state such that $\Tr H\rho_A<+\infty$ for some positive operator $H$ on $\HH_{A}$ of the form (\ref{H-form})
then the infimum in (\ref{EFA+}) can be taken over all ensembles $\{p_i,\rho_i\}$ in $\P_{\rm d}(\HH_{AB})$ such that $\Tr H[\rho_i]_A<+\infty$ for all $i$.}
\end{theorem}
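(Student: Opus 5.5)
Most of the statement follows directly from the definition (\ref{EFA}) as an infimal convolution of $E_F$ with the function $\frac{1}{2\lambda}\|\cdot\|_1$. I would go through the parts in order, treating the first infimum in (\ref{EFA}) as the primary object.

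For A), take $\sigma$ separable. Then $E_F(\sigma)=0$, so $E_F^\lambda(\rho)\le \frac{1}{2\lambda}\|\rho-\sigma\|_1$. Taking the infimum over separable $\sigma$ gives ${\rm ed}(\rho)/\lambda$, and ${\rm ed}(\rho)\le1$ because $\frac12\|\rho-\sigma\|_1\le1$ for any two states.

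For B), convexity comes from joint convexity of $(\rho,\sigma)\mapsto E_F(\sigma)+\frac{1}{2\lambda}\|\rho-\sigma\|_1$, which uses convexity of $E_F$. Concretely, for $\rho=t\rho_1+(1-t)\rho_2$ I take near-optimal $\sigma_1,\sigma_2$ for $\rho_1,\rho_2$ and test the infimum at $t\sigma_1+(1-t)\sigma_2$. The Lipschitz bound follows from the triangle inequality: $E_F^\lambda(\rho)\le E_F(\varsigma)+\frac{1}{2\lambda}\|\sigma-\varsigma\|_1+\frac{1}{2\lambda}\|\rho-\sigma\|_1$; I then take the infimum over $\varsigma$ and symmetrize. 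Finiteness, which is needed here, comes from A).

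For C), monotonicity in $\lambda$ is immediate, and choosing $\sigma=\rho$ gives $E_F^\lambda\le E_F$. Convergence is the main analytic step, and I expect it to be the main obstacle. Set $L=\lim_{\lambda\to0}E_F^\lambda(\rho)$ and assume $L<E_F(\rho)$, with $L$ finite.
\begin{itemize}
\item Choose $\lambda_n\to0$ and $\sigma_n$ such that $E_F(\sigma_n)+\frac{1}{2\lambda_n}\|\rho-\sigma_n\|_1\le L+1/n$.
\item Since $E_F(\sigma_n)\ge0$, this gives $\|\rho-\sigma_n\|_1\le 2\lambda_n(L+1)\to0$.
\item Lower semicontinuity of $E_F$ (the continuous-ensemble definition (\ref{EF-c}) is globally lower semicontinuous) then gives $E_F(\rho)\le\liminf E_F(\sigma_n)\le L$, a contradiction.
\end{itemize}
The case $E_F(\rho)=+\infty$ is handled the same way: if $L<\infty$, the argument forces $E_F(\rho)\le L$, so in fact $L=\infty$.

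For D), if (\ref{lsc}) holds then every term in the infimum is at least $E_F(\rho)$, so $E_F^{\lambda_\rho}(\rho)=E_F(\rho)$. Conversely, equality gives $E_F(\rho)\le E_F(\sigma)+\frac{1}{2\lambda_\rho}\|\rho-\sigma\|_1$ for every $\sigma$, which is (\ref{lsc}). The statement for all $\lambda\le\lambda_\rho$ then follows from C).

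For E), separable $\rho$ gives $0$ by A). For the converse, $E_F^\lambda(\rho)=0$ gives $\sigma_n$ with $E_F(\sigma_n)\to0$ and $\sigma_n\to\rho$, so $E_F(\rho)=0$ by lower semicontinuity. Then $\rho$ is separable, using the known fact that $E_F$ (continuous version) vanishes exactly on separable states \cite{EM}.

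For F), I test $E_F^\lambda(\Phi(\rho))$ at $\Phi(\sigma)$ and use the two hypotheses on $\Phi$. For G), the same computation with $\Phi(\rho)=\rho$ shows that replacing $\sigma$ by $\Phi(\sigma)$ never increases the objective. The measure version follows by choosing, for $\Phi(\sigma)$, an optimal measure from (\ref{EF}).

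For H), I reduce to showing that for each $\sigma$ and $\varepsilon>0$ there is a discrete ensemble $\{p_i,\rho_i\}$ with $\sum_ip_iS([\rho_i]_A)\le E_F(\sigma)+\varepsilon$ and average close to $\sigma$ in trace norm. The average need not equal $\sigma$, which is exactly why the penalty term makes this reduction work. This comes from density of $\P_{\rm d}$ in $\P$ together with an approximation preserving $S_A$. I would take the optimal $\mu$, partition $\ext\St$ into small Borel cells, and replace $\mu$ on each cell by its conditional barycenter decomposed optimally. A simpler route is to partition into cells and pick, in each cell, a pure state whose entropy is at most the cell average; the barycenter then moves by at most the cell diameter, and $\sum_ip_iS\le S_A(\mu)$. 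This second route is the one I would use. For the energy refinement, I would first replace $\sigma$ by the truncated state $P_n\otimes I\,\sigma\,P_n\otimes I$ normalized, where $P_n$ is a spectral projector of $H$. This is close to $\sigma$, and by local-operation monotonicity (selective projection) its $E_F$ is controlled. I expect this to be a delicate point, so I would instead truncate each pure $\rho_i$ of the discrete ensemble by $P_n\otimes I_B$. Each $\rho_i$ is pure, so each $S([\rho_i]_A)$ changes by an amount controlled by continuity on finite-rank $A$-parts, with the error estimated via Fannes-type bounds, and the average converges in trace norm as $n\to\infty$.
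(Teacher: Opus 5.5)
Your proofs of A)--G) are correct and follow the paper essentially step by step. Testing convexity at $t\sigma_1+(1-t)\sigma_2$ instead of at the midpoint is cosmetic.

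For the first statement of H) you take a genuinely different and more elementary route. The paper first restricts the infimum to measures on $\ext\St(\HH^n_A\otimes\HH_B)$ with $\dim\HH^n_A<+\infty$, via Corollary~\ref{Edc} when $\rank\rho_A<+\infty$ and via Lemma~\ref{hl} when $\rank\rho_A=+\infty$. It then combines continuity of $S_A$ on such measures (Proposition 3 in \cite{AOE}) with density of discrete measures. Your cell selection works as follows: partition the separable space $\ext\St(\HH_{AB})$ into countably many Borel cells of trace-norm diameter at most $\delta$, and in each cell of positive measure pick a pure state whose marginal entropy does not exceed the cell average. Such a state exists because $\varrho\mapsto S(\varrho_A)$ is Borel and $S_A(\mu)<+\infty$. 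This directly gives a discrete $\nu$ with $S_A(\nu)\le S_A(\mu)$ and $\|\bar\rho(\nu)-\bar\rho(\mu)\|_1\le\delta$. It works in any dimension and needs no continuity of $S_A$. What it does not give is any control of energies, since the selected states need not have finite energy.

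The gap is in the energy refinement. You propose to truncate each $\rho_i$ by $P_n\otimes I_B$ and control $S([\rho_i]_A)$ ``by continuity on finite-rank $A$-parts'' via Fannes-type bounds. This step fails as stated:
\begin{enumerate}
\item In general $[\rho_i]_A$ has infinite rank, so dimension-dependent bounds do not apply.
\item Entropy is only lower semicontinuous in infinite dimensions, so $[\rho^n_i]_A\to[\rho_i]_A$ gives only $\liminf_n S([\rho^n_i]_A)\ge S([\rho_i]_A)$, which is the wrong direction.
\end{enumerate}

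The missing ingredient is a one-sided bound. For pure $\psi$ and $t=\Tr P_n\psi_A>0$, the normalized truncation $\psi^n$ satisfies $t\,S(\psi^n_A)\le S(\psi_A)$. This holds because $\psi_B=t\psi^n_B+(1-t)\psi'_B$ (a projective measurement on $A$ does not change the $B$-marginal on average) and $S$ is concave. Now use the reweighted ensemble $\{p_it^n_i/T_n,\rho^n_i\}$ with $T_n=\Tr P_n\bar\rho_A$, whose average is $(P_n\otimes I_B)\bar\rho(P_n\otimes I_B)/T_n$. Its entropy term is at most $T_n^{-1}\sum_ip_iS([\rho_i]_A)$. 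If you keep the original weights $p_i$, you must first pass to finite ensembles, since there is no bound uniform in $i$.

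You also never use the hypothesis $\Tr H\rho_A<+\infty$. It is needed because the $\tau_k$ in (\ref{H-form}) need not span $\HH_A$. The hypothesis gives $\supp\rho\subseteq\overline{\bigcup_n\HH^n_A\otimes\HH_B}$, so that, after restricting via claim G to ensembles living there, the truncated averages converge to the original ones.

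The cleanest repair combines the paper's Lemma~\ref{hl} with your own argument:
\begin{enumerate}
\item Truncate $\rho$ rather than the ensemble, setting $\rho_n=(P_n\otimes I_B)\rho(P_n\otimes I_B)/\Tr P_n\rho_A$. By claim B and the triangle inequality this costs at most $\frac{1}{\lambda}\|\rho-\rho_n\|_1$.
\item By claim G, restrict the infimum for $\rho_n$ to measures on $\ext\St(\HH^n_A\otimes\HH_B)$.
\item Run your cell selection inside this closed set. All selected states then have $A$-marginals supported in $\HH^n_A$, hence finite energy, and no entropy continuity is needed at all.
\end{enumerate}
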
\medskip

\emph{Proof.} A) This claim directly follows from the definition of the function $E^\lambda_F$ because $E_F(\sigma)=0$ for any separable state $\sigma$
in $\St(\HH_{AB})$. \smallskip

B) To prove the convexity of the function $E_F^\lambda$ assume that $\rho_1$  and $\rho_2$ are arbitrary states
in $\St(\HH_{AB})$. Then for any $\varepsilon>0$ there exist  states $\sigma_1$  and $\sigma_2$
in $\St(\HH_{AB})$ such that
$$
E_F^\lambda(\rho_i)\geq E_F(\sigma_i)+\frac{1}{2\lambda}\|\rho_i-\sigma_i\|_1-\varepsilon,\quad i=1,2.
$$
Let $\rho_*=\frac{1}{2}(\rho_1+\rho_2)$ and $\sigma_*=\frac{1}{2}(\sigma_1+\sigma_2)$. Then
$$
\begin{array}{c}
E_F^\lambda(\rho_*)\leq E_F(\sigma_*)+\frac{1}{2\lambda}\|\rho_*-\sigma_*\|_1 \leq  \frac{1}{2}(E_F(\sigma_1)+E_F(\sigma_2))+\frac{1}{2\lambda}\|\frac{\rho_1-\sigma_1}{2}+\frac{\rho_2-\sigma_2}{2}\|_1
\\\\\leq  \frac{1}{2}(E_F(\sigma_1)+\frac{1}{2\lambda}\|\rho_1-\sigma_1\|_1)+\frac{1}{2}(E_F(\sigma_2)+\frac{1}{2\lambda}\|\rho_2-\sigma_2\|_1)\leq \frac{1}{2}(E_F^\lambda(\rho_1)+E_F^\lambda(\rho_2))+\varepsilon,
\end{array}
$$
where the convexity of the function $E_F$ and the triangle inequality for the trace norm were used. Since $\varepsilon$ is arbitrary we obtain $\,E_F^\lambda(\rho_*)\leq\frac{1}{2}(E_F^\lambda(\rho_1)+E_F^\lambda(\rho_2))$.

To prove the Lipschitz continuity  of the function $E_F^\lambda$ assume that $\rho_1$  and $\rho_2$ are arbitrary states
in $\St(\HH_{AB})$. Let $\varepsilon>0$  and  $\sigma_2$ be a state
in $\St(\HH_{AB})$ such that
$$
E_F^\lambda(\rho_2)\geq E_F(\sigma_2)+\frac{1}{2\lambda}\|\rho_2-\sigma_2\|_1-\varepsilon.
$$
Then we have
$$
\begin{array}{c}
 E^\lambda_F(\rho_1)\leq E_F(\sigma_2)+\frac{1}{2\lambda}\|\rho_1-\sigma_2\|_1\leq  E_F(\sigma_2)+\frac{1}{2\lambda}\|\rho_2-\sigma_2\|_1+\frac{1}{2\lambda}\|\rho_2-\rho_1\|_1
\\\\\leq E^\lambda_F(\rho_2)+\frac{1}{2\lambda}\|\rho_2-\rho_1\|_1+\varepsilon,
\end{array}
$$
where the definition of the function $E^\lambda_F$ and the triangle inequality for the trace norm were used. Since $\varepsilon$ is arbitrary, we obtain $\,E_F^\lambda(\rho_1)-E_F^\lambda(\rho_2)\leq\frac{1}{2\lambda}\|\rho_1-\rho_2\|_1$.\smallskip

By permuting the roles of $\rho_1$  and $\rho_2$ we show that  $E_F^\lambda(\rho_2)-E_F^\lambda(\rho_1)\leq\frac{1}{2\lambda}\|\rho_2-\rho_1\|_1$.\smallskip

C) Let $\rho$ be an arbitrary state in $\St(\HH_{AB})$. It is clear that $\,E_F^{\lambda_1}(\rho)\leq E_F^{\lambda_2}(\rho)\,$ for any $\lambda_1>\lambda_2$. It is also clear that
$\,\sup_{\lambda>0}E_F^{\lambda}(\rho)\leq E_F(\rho).$ Hence,
$$
E_F^\lambda(\rho)\nearrow C\leq E_F(\rho)\leq +\infty \quad\textup{as}\quad \lambda\searrow 0^+.
$$
Assume that $\,C<E_F(\rho)$. This implies that $C<+\infty$. Let $\{\lambda_n\}$ be any sequence of positive numbers converging
to zero and $\{\sigma_n\}$ be a sequence of states such that
$$
E_F^{\lambda_n}(\rho)\geq E_F(\sigma_n)+\frac{1}{2\lambda_n}\|\rho-\sigma_n\|_1-\frac{1}{n}\quad \forall n.
$$
Since $C<+\infty$, the sequence $\{\sigma_n\}$ converges to the state $\rho$. So, the lower semicontinuity of $E_F$
implies that
$$
\liminf_{n\to+\infty} E_F^{\lambda_n}(\rho)\geq \liminf_{n\to+\infty}\left(E_F(\sigma_n)+\frac{1}{2\lambda_n}\|\rho-\sigma_n\|_1-\frac{1}{n}\right)\geq E_F(\rho).
$$
This contradicts our assumption.\smallskip

D) This claim directly follows from the definitions of the functions $E_F$ and $E_F^\lambda$.\smallskip

E) Since $E_F^{\lambda}(\rho)\leq E_F(\rho)$ for any state $\rho$, to prove this claim it suffices to show that
$$
\exists\lambda>0\;\textrm{ such that }\; E_F^{\lambda}(\rho)=0\quad \Rightarrow \quad \rho\;\textrm{ is a separable state.}
$$
Assume that $E_F^{\lambda}(\rho)=0$. Then  there is a sequence $\{\sigma_n\}$ of states in $\St(\HH_{AB})$ such that
$$
\lim_{n\to+\infty} \left(E_F(\sigma_n)+\frac{1}{2\lambda}\|\rho-\sigma_n\|_1\right)=0.
$$
Thus, the sequence $\{\sigma_n\}$ converges to the state $\rho$ and
the sequence $E_F(\sigma_n)$ tends to zero. Hence
$$
E_F(\rho)\leq \liminf_{n\to+\infty} E_F(\sigma_n)=0
$$
by the lower semicontinuity of the function $E_F$. So, the state $\rho$ is separable because $E_F$ is an entanglement measure.\smallskip

F) The definition of the function $E_F^\lambda$ implies
 \begin{equation*}
\begin{array}{rl}
E_F^\lambda(\Phi(\rho))\,=&\!\displaystyle\inf_{\sigma\in\St(\HH_{AB})}\!\left\{E_F(\sigma)+\frac{1}{2\lambda}\|\Phi(\rho)-\sigma\|_1\right\}\\
\leq&\!\displaystyle\inf_{\sigma\in\St(\HH_{AB})}\!\left\{E_F(\Phi(\sigma))+\frac{1}{2\lambda}\|\Phi(\rho)-\Phi(\sigma)\|_1\right\}\\
\leq&\!\displaystyle\inf_{\sigma\in\St(\HH_{AB})}\!\left\{E_F(\sigma)+\frac{1}{2\lambda}\|\rho-\sigma\|_1\right\}=E_F^\lambda(\rho).
\end{array}
\end{equation*}

G) This claim follows from the definition of $E_F^\lambda(\rho)$ and the inequality
$$
E_F(\Phi(\sigma))+\frac{1}{2\lambda}\|\rho-\Phi(\sigma)\|_1\leq E_F(\sigma)+\frac{1}{2\lambda}\|\rho-\sigma\|_1\qquad \forall \sigma\in\St(\HH_{AB}) .
$$

H) Assume first that $\rank\rho_A<+\infty$. Theorem \ref{P-1}G implies (see Corollary \ref{Edc} below) that the second infimum
in (\ref{EFA}) can be taken over the set $\P(\supp\rho_A\otimes\mathcal{H}_B)\,$. By Proposition 3 in \cite{AOE} the function $S_A$ is continuous on  $\P(\supp\rho_A\otimes\mathcal{H}_B)$. So,
in this case the first statement of claim H is valid because discrete measures form a dense subset in $\P(\supp\rho_A\otimes\mathcal{H}_B)$ \cite{Par}. The second statement is derived from the first one by replacing $H$ with the operator $PHP$, where $P$ is the projector onto $\supp\rho_A$.

If $\rank\rho_A=+\infty$ then it suffices to prove the second  statement of claim H (as for every $\rho$ one can find a positive operator $H$ of the form (\ref{H-form}) such that
$\Tr H\rho_A<+\infty$).

Assume that the operator $H$ has representation (\ref{H-form}). Let
$\mathcal{H}_A^n$ be the subspace of $\mathcal{H}_A$ generated by the
vectors $\tau_1,\tau_2,\ldots,\tau_n$. Since the condition $\,\Tr H\rho_A<+\infty\,$
implies $\,\supp\rho\subseteq\overline{\bigcup_n\mathcal{H}_A^n\otimes\HH_B},$ this claim can be proved by using Lemma \ref{hl}  below and by noting that for every natural $n$
\begin{itemize}
  \item  the function $S_A$ is continuous on $\,\P(\mathcal{H}_A^n\otimes\HH_B)$ by Proposition 3 in \cite{AOE},
  \item  discrete measures form a dense subset in $\,\P(\mathcal{H}_A^n\otimes\HH_B)$ \cite{Par},
  \item  the function $\,\varsigma\to \Tr H\varsigma\,$  takes finite values on the set $\St(\mathcal{H}_A^n)$.
\end{itemize}
$\Box$\medskip

\begin{lemma}\label{hl}
\emph{Let $\,\mathfrak{H}\doteq\{\mathcal{H}_A^n\}_{n\in\mathbb{N}}$ be a family of subspaces
of $\,\mathcal{H}_A$ such that $\,\mathcal{H}_A^n\subseteq\mathcal{H}_A^{n+1}$
for all $n$. Then for any state $\,\rho\in\St(\mathcal{H}_{AB})$ such that $\,\supp\rho\subseteq\overline{\bigcup_{n\in\mathbb{N}}\mathcal{H}_A^n\otimes\HH_B}\,$ the infimum
in the second expression in (\ref{EFA}) can be taken over the subset}
\[
\P_{\mathfrak{H}}
=
\left\{
\mu\in\P(\mathcal{H}_{AB})
\;\middle|\;
\exists n:\
\operatorname{supp}\mu
\subset
\mathcal{H}_A^n\otimes\mathcal{H}_B
\right\},
\]
\end{lemma}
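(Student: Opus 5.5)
Our plan is to show that for any $\mu\in\P(\HH_{AB})$ and $\varepsilon>0$ there is some $\mu'\in\P_{\mathfrak{H}}$ with
$$
S_A(\mu')+\frac{1}{2\lambda}\|\rho-\bar{\rho}(\mu')\|_1\leq S_A(\mu)+\frac{1}{2\lambda}\|\rho-\bar{\rho}(\mu)\|_1+\varepsilon.
$$
The natural way to get $\mu'$ is to compress $\mu$ by a local channel acting on $A$. Write $P_n$ for the projector onto $\mathcal{H}_A^n$ and fix a unit vector $\tau\in\mathcal{H}_A^1$; we may assume $\mathcal{H}_A^1\neq\{0\}$, otherwise discard the zero subspaces. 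Define the channel
$$
\Pi_n(\varsigma)=(P_n\otimes I_B)\varsigma(P_n\otimes I_B)+|\tau\rangle\langle\tau|\otimes\Tr_A[((I_A-P_n)\otimes I_B)\varsigma].
$$
This is a local operation on $A$, namely a measurement $\{P_n,I_A-P_n\}$ followed by preparation of $\tau$ in the second outcome. It is implemented by Kraus operators of the form $V\otimes I_B$, so it maps pure states into ensembles of pure states. Concretely, a pure state $\varrho=|\psi\rangle\langle\psi|$ splits into two pure branches. The first is the normalized $(P_n\otimes I_B)\psi$. The second is obtained from $((I_A-P_n)\otimes I_B)\psi$, whose reduced $B$-state is in general mixed; we decompose it further into its eigenbasis, so each piece becomes $\tau\otimes\phi_j$, a product state with zero entropy. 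Pushing $\mu$ forward through this measurable decomposition gives a measure $\mu_n$ supported on $\ext\St(\mathcal{H}_A^n\otimes\HH_B)$ with $\bar\rho(\mu_n)=\Pi_n(\bar\rho(\mu))$.

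The key steps are as follows. First, we bound the entropy term: $S_A(\mu_n)\leq S_A(\mu)$ up to a term that vanishes as $n\to\infty$. For a single pure state with Schmidt data, the first branch has probability $p=\|(P_n\otimes I)\psi\|^2$, and its reduced $A$-state is the compression $P_n\varrho_AP_n/p$. The known LOCC monotonicity of entanglement entropy on average (Nielsen/Vidal) gives $pS(\text{branch})\leq S(\varrho_A)$, because the product branches contribute zero. Integrating over $\mu$ then gives $S_A(\mu_n)\leq S_A(\mu)$ directly. To avoid subtle measurability issues, we could instead rely on the monotonicity of $E_F$ under local channels and work with $\sigma=\bar\rho(\mu)$ via the first infimum, choosing an optimal measure of $\Pi_n(\sigma)$ afterwards. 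But that optimal measure must still live on $\mathcal{H}_A^n\otimes\HH_B$. It does, since any pure decomposition of a state supported in $\mathcal{H}_A^n\otimes\HH_B$ has its components supported there too, and the minimum in (\ref{EF}) is attained. This second route is cleaner, and we will use it: for any $\sigma$ take $\mu'\in\P_{\Pi_n(\sigma)}$ optimal, so that $S_A(\mu')=E_F(\Pi_n(\sigma))\leq E_F(\sigma)$.

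Second, we control the distance term: $\|\rho-\Pi_n(\sigma)\|_1\leq\|\rho-\Pi_n(\rho)\|_1+\|\Pi_n(\rho)-\Pi_n(\sigma)\|_1\leq\|\rho-\Pi_n(\rho)\|_1+\|\rho-\sigma\|_1$. The first summand tends to $0$ as $n\to\infty$ because $P_n\otimes I_B\to$ the projector onto $\overline{\bigcup_n\mathcal{H}_A^n\otimes\HH_B}$ strongly, which contains $\supp\rho$. Hence $(P_n\otimes I)\rho(P_n\otimes I)\to\rho$ in trace norm, and the remainder term has trace $1-\Tr(P_n\otimes I)\rho\to0$. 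Choosing $n$ large yields the $\varepsilon$-approximation, and since $\P_{\mathfrak{H}}\subseteq\P(\HH_{AB})$ the reverse inequality is trivial.

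The main obstacle is justifying $E_F(\Pi_n(\sigma))\leq E_F(\sigma)$ for the continuous-roof $E_F$ in infinite dimensions with a non-unital local channel. We would cite the LOCC (local channel) monotonicity of $E_F$ established in \cite{EM,Lami-new}. Alternatively, we would verify it directly: apply the branch decomposition measurably to an optimal $\mu\in\P_\sigma$, and use the pure-state inequality $pS(P_n\varrho_AP_n/p)\leq S(\varrho_A)$. That inequality follows from concavity of entropy applied to the pinching $P_n\varrho_AP_n+(I-P_n)\varrho_A(I-P_n)$, whose entropy is at least $S(\varrho_A)$, hmm, in the wrong direction. So we must instead use the Schmidt-coefficient majorization argument, i.e. Nielsen's theorem, whose infinite-dimensional version we would take from the literature.
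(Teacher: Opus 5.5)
Your proof is correct, and it uses the same ingredients as the paper: a ``compress onto $\mathcal{H}_A^n$, otherwise replace by a fixed state'' channel, the LOCC-monotonicity of $E_F$, trace-norm contractivity, and the convergence $(P_n\otimes I_B)\rho(P_n\otimes I_B)\to\rho$. What differs is where the compression is applied.

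\begin{itemize}
\item The paper compresses the \emph{target} state to $\rho_n=P_n\rho P_n/\Tr P_n\rho$. It then invokes Theorem~\ref{P-1}G for $\rho_n$, with a channel fixing $\rho_n$ as in Corollary~\ref{Edc}, to obtain a near-optimal $\mu_n$ supported in $\mathcal{H}_A^n\otimes\HH_B$. Finally it transfers back to $\rho$ via the Lipschitz bound of Theorem~\ref{P-1}B and one triangle inequality, at cost $2\varepsilon_n$ with $\varepsilon_n=\frac{1}{2\lambda}\|\rho-\rho_n\|_1$.
\item You compress the \emph{competitor} $\sigma$ by $\Pi_n$. This is in effect an approximate form of Theorem~\ref{P-1}G for a channel that only nearly fixes $\rho$. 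The defect $\frac{1}{2\lambda}\|\rho-\Pi_n(\rho)\|_1$ enters additively and uniformly in $\sigma$.
\end{itemize}

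Your version is self-contained and avoids both the normalization and the Lipschitz step. The paper's version is more modular, reusing claims B and G already proved. Your abandoned push-forward route is not needed.

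Your last paragraph's worry is unfounded, and Nielsen's theorem is not required. Split $\psi$ by $\{P_n,I_A-P_n\}$ into branches with probabilities $p,1-p$ and $B$-marginals $\omega_1,\omega_2$. Then $\varrho_B=p\,\omega_1+(1-p)\,\omega_2$. Each branch is pure, so $S(P_n\varrho_AP_n/p)=S(\omega_1)$. Concavity of entropy on the $B$ side therefore gives $pS(P_n\varrho_AP_n/p)\le S(\varrho_B)=S(\varrho_A)$; you had looked at the wrong marginal.

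In any case, citing the local-channel (LOCC) monotonicity of the continuous-roof $E_F$ is exactly what the paper itself does in Corollaries~\ref{E-r} and~\ref{Edc}. So your main route needs no further justification.
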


\medskip

\noindent
\emph{Proof.} Let $P_n$ be the projector onto
$\mathcal{H}_A^n\otimes\mathcal{H}_B$ and $\,\rho_n
=
\frac{P_n\rho P_n}{\operatorname{Tr}P_n\rho}$ be a state in $\St(\HH_{AB})$  (we assume that $n$ is so large that $\operatorname{Tr}P_n\rho>0$).
Then
\[
\varepsilon_n
\doteq
\frac{1}{2\lambda}
\left\|
\rho-\rho_n
\right\|_1
\]
tends to zero as $n\to\infty$. Since
$\,\operatorname{supp}\rho_n
\subset
\mathcal{H}_A^n\otimes\mathcal{H}_B$, Theorem \ref{P-1}G implies that there is a measure
$\,\mu_n\in
\P(\mathcal{H}_A^n\otimes\mathcal{H}_B)\,$
such that
\[
E_F^\lambda(\rho_n)
\geq
S_A(\mu_n)
+
\frac{1}{2\lambda}
\left\|
\rho_n-\overline{\rho}(\mu_n)
\right\|_1
-
\frac{1}{n}.
\]
At the same time, Theorem \ref{P-1}B  implies that $\,E_F^\lambda(\rho)\geq E_F^\lambda(\rho_n)-\varepsilon_n$. So, we obtain
$$
E_F^\lambda(\rho)\geq S_A(\mu_n)
+
\frac{1}{2\lambda}
\left\|
\rho_n-\overline{\rho}(\mu_n)
\right\|_1
-
\frac{1}{n}-\varepsilon_n\geq S_A(\mu_n)
+
\frac{1}{2\lambda}
\left\|
\rho-\overline{\rho}(\mu_n)
\right\|_1
-
\frac{1}{n}-2\varepsilon_n.
$$
Since $\,\frac{1}{n}+2\varepsilon_n\,$ tends to zero as $\,n\to+\infty$, this implies the claim of the lemma. $\Box$\medskip

Theorem \ref{P-1}F and the LOCC-monotonicity of the EoF imply the following\smallskip

\begin{corollary}\label{E-r} \emph{For every $\lambda>0$ the function $E^\lambda_F$ does not increase under nonselective
LOCC-operations (cf.\cite{Vidal,P&V,Lami-new}).}
\end{corollary}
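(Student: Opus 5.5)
The plan is to apply Theorem \ref{P-1}F with $\Phi$ an arbitrary nonselective LOCC operation, i.e. a quantum channel $\Phi:\T(\HH_{AB})\to\T(\HH_{AB})$ realizable by local operations and classical communication. The argument has two ingredients, one for each hypothesis of Theorem \ref{P-1}F.

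\emph{Contractivity of the trace distance.} For all states $\rho,\sigma$ in $\St(\HH_{AB})$ we need
$$
\|\Phi(\rho)-\Phi(\sigma)\|_1\leq\|\rho-\sigma\|_1 .
$$
This holds because every quantum channel is a positive trace-preserving map, hence a contraction in the trace norm on Hermitian trace-class operators. This is standard and valid in any dimension, so no extra work is needed here.

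\emph{LOCC-monotonicity of $E_F$.} We need $E_F(\Phi(\sigma))\leq E_F(\sigma)$ for all $\sigma\in\St(\HH_{AB})$. Since $E_F$ is defined by the continuous convex roof (\ref{EF-c}), I would cite the known nonselective LOCC-monotonicity of this version of the EoF in infinite dimensions \cite{EM,Lami-new}. The finite-dimensional statement \cite{Vidal,P&V} would not suffice, because here $E_F$ is the measure-based version and the systems may be infinite-dimensional. Should a direct argument be needed, I would take a minimizing measure $\mu\in\P_\sigma(\HH_{AB})$ from (\ref{EF}), use the fact that a local operation applied to a pure state produces an ensemble whose average marginal entropy does not exceed that of the input pure state, and integrate over $\mu$. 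The lower semicontinuity of $E_F$, together with the barycentric representation, then passes this to $\Phi(\sigma)$. Constructing the integrated, measurable output ensemble is the one delicate step, but I expect to treat it as a cited result rather than reprove it.

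With both hypotheses verified for every $\sigma$, Theorem \ref{P-1}F gives $E^\lambda_F(\Phi(\rho))\leq E^\lambda_F(\rho)$ for every $\rho$ and every $\lambda>0$, which is the corollary. The main obstacle is therefore not in the present argument itself. It lies entirely in making sure the cited LOCC-monotonicity applies to the continuous convex-roof $E_F$ for general (including infinite) LOCC protocols.
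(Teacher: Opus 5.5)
Your proposal is correct and follows the paper's own route: the corollary is obtained by applying Theorem~\ref{P-1}F to the channel $\Phi$ of a nonselective LOCC operation, with trace-norm contractivity of channels and the cited LOCC-monotonicity of the continuous convex-roof $E_F$ supplying its two hypotheses. The paper also sketches an equivalent dual proof at the end of Section~3.2, using the variational expression (\ref{VE}) and the dual map $\Phi^*$, but the primary argument is the one you chose.
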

\medskip

Due to Corollary \ref{E-r} and claims $B$ and $E$ of Theorem \ref{P-1} we may say that  for every $\lambda>0$ \emph{the function $E^\lambda_F$
is a "nonselective" entanglement monotone}.\footnote{To show that $E^\lambda_F$ is a true entanglement monotone (in terms of \cite{Vidal,P&V}) it suffices to prove that $E^\lambda_F$ does not increase under selective LOCC-operations. This is an open question discussed in  Section 7 below.}
\medskip

Theorem \ref{P-1}G allows us to prove the following\smallskip
\begin{corollary}\label{Edc}\emph{ For any $\lambda>0$ the first (resp. the second) infimum in (\ref{EFA}) can be taken over the set
$\,\St(\HH^\rho_A\otimes\HH^\rho_B)\,$ (resp. $\,\P(\HH^\rho_A\otimes\HH^\rho_B)$), where $\,\HH^\rho_X\doteq\supp\rho_X$, $X=A,B$.}
\end{corollary}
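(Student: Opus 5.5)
We derive this from Theorem \ref{P-1}G by choosing a suitable map $\Phi$ that projects onto the local supports. Let $P_A$ and $P_B$ be the projectors onto $\HH^\rho_A=\supp\rho_A$ and $\HH^\rho_B=\supp\rho_B$, and put $P=P_A\otimes P_B$. Fix a state $\omega_0\in\St(\HH^\rho_A\otimes\HH^\rho_B)$ that is a product state, e.g. $\omega_0=|a\rangle\langle a|\otimes|b\rangle\langle b|$ with unit vectors $a\in\HH^\rho_A$, $b\in\HH^\rho_B$. Define
$$
\Phi(\sigma)=P\sigma P+\left(\Tr(I_{AB}-P)\sigma\right)\omega_0 .
$$
This is a trace-preserving, completely positive map, and its range lies in $\St(\HH^\rho_A\otimes\HH^\rho_B)$.

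The three hypotheses of Theorem \ref{P-1}G are checked in turn.

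First, $\Phi(\rho)=\rho$. Indeed, $\supp\rho\subseteq\supp\rho_A\otimes\supp\rho_B$, which is standard since $\Tr(I-P_A)\otimes I\,\rho=0$ and similarly for $B$. Hence $P\rho P=\rho$ and $\Tr(I-P)\rho=0$.

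Second, the contraction property $\|\Phi(\rho)-\Phi(\sigma)\|_1\le\|\rho-\sigma\|_1$ holds because $\Phi$ is a positive trace-preserving map. Alternatively, one can check it directly: $\|P X P\|_1+|\Tr(I-P)X|\le\|X\|_1$ for Hermitian $X$, via the decomposition $X=X_+-X_-$.

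Third, $E_F(\Phi(\sigma))\le E_F(\sigma)$. Here I would write $\Phi$ as a local operation. The map $\sigma\mapsto P\sigma P$ is the selective outcome of the local two-outcome measurements $\{P_A,I-P_A\}$ and $\{P_B,I-P_B\}$. Whenever either party gets the outcome "$I-P$", the parties discard the state and prepare $\omega_0$, which is possible by LOCC because $\omega_0$ is a product state. This is a nonselective LOCC operation, since the outcomes are classical and communicated. The LOCC monotonicity of $E_F$ (already invoked for Corollary \ref{E-r}) then gives the inequality.

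If one prefers to avoid quoting LOCC monotonicity for the continuous version of $E_F$, the inequality can be shown directly. Take an optimal measure $\mu\in\P_\sigma(\HH_{AB})$ and push each pure state $\varrho$ forward to the ensemble consisting of $P\varrho P/\Tr P\varrho$ with weight $\Tr P\varrho$, the pure states of the form $(P_A\otimes(I-P_B))\varrho(\cdot)/\ldots$ and their analogues, and $\omega_0$. The weights of all pieces other than $P\varrho P/\Tr P\varrho$ are each replaced by the pure product state $\omega_0$, which has zero entropy. It then remains to show that the average entanglement does not increase under local projective measurement of a pure state, $\Tr P\varrho\, S([P\varrho P]_A/\Tr P\varrho)\le S(\varrho_A)$. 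This is the standard monotonicity of pure-state entanglement under local measurements, which follows from Nielsen majorization or from concavity of entropy together with the Schmidt decomposition.

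I expect this third step to be the main obstacle, mainly to justify it cleanly in infinite dimensions with the continuous definition (\ref{EF-c}). I would first try to cite the LOCC monotonicity of $E_F$ from \cite{EM,Lami-new}, and fall back on the measure push-forward argument above if needed. With the three properties established, Theorem \ref{P-1}G shows that the first infimum can be restricted to $\Phi(\St(\HH_{AB}))\subseteq\St(\HH^\rho_A\otimes\HH^\rho_B)$, and enlarging the set back to $\St(\HH^\rho_A\otimes\HH^\rho_B)$ does not change the infimum.

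For the second infimum, Theorem \ref{P-1}G restricts it to measures whose barycenter lies in $\St(\HH^\rho_A\otimes\HH^\rho_B)$. Any such measure is supported on pure states with vectors in $\HH^\rho_A\otimes\HH^\rho_B$, since a barycenter supported in a subspace forces $\mu$-almost every pure state to lie in it. Hence $\P_\Phi\subseteq\P(\HH^\rho_A\otimes\HH^\rho_B)$, which proves the second claim.
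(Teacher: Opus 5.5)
Your proposal is correct and follows the paper's route. The paper derives Corollary~\ref{Edc} from Theorem~\ref{P-1}G using exactly this channel $\Phi(\sigma)=P\sigma P+[\Tr(I_{AB}-P)\sigma]\,\tau$, with $\tau$ an arbitrary separable state in $\St(\HH^\rho_A\otimes\HH^\rho_B)$; your pure product $\omega_0$ is a special case. The paper leaves implicit both the check of the hypotheses of Theorem~\ref{P-1}G and the inclusion $\P_\Phi\subseteq\P(\HH^\rho_A\otimes\HH^\rho_B)$, and your arguments, including the push-forward proof of $E_F(\Phi(\sigma))\le E_F(\sigma)$, supply these details correctly.
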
\smallskip

To derive Corollary \ref{Edc} from Theorem \ref{P-1}G one should take the channel
$$
\Phi(\sigma)=P_A\otimes P_B\cdot \sigma\cdot P_A\otimes P_B + [\Tr(I_{AB}-P_A\otimes P_B)\sigma]\shs \tau,\quad \sigma\in\St(\HH_{AB}),
$$
where $P_X$ is the projector onto $\HH^\rho_X$, $X=A,B$, and $\tau$ is an arbitrary separable state in $\,\St(\HH^\rho_A\otimes\HH^\rho_B)$.
\medskip

\begin{remark}\label{Edr}
Corollary \ref{Edc} shows the invariance of the definition of the function $E^\lambda_F$  w.r.t. to the embeddings $\,\HH_A\subset\HH_{A'}$ and $\,\HH_B\subset\HH_{B'}$:
$$
\inf_{\sigma\in\St(\HH_{AB})}\!\left\{E_F(\sigma)+\frac{1}{2\lambda}\|\rho-\sigma\|_1\right\}=\inf_{\sigma\in\St(\HH_{A'B'})}\!\left\{E_F(\sigma)+\frac{1}{2\lambda}\|\rho-\sigma\|_1\right\}\quad \forall\lambda>0.
$$
\end{remark}\medskip

\begin{remark}\label{r2}
Theorem \ref{P-1}H  implies, in particular, that in the case of infinite-dimensional quantum systems $A$ and $B$ the definition
of $E^\lambda_F(\rho)$ for any state $\rho$ with finite marginal energy of one of the subsystems (w.r.t. some Hamiltonian $H$ in this subsystem)  requires optimization only over
discrete ensembles of pure state each of which has finite marginal energy w.r.t. $H$. So, we may say that\emph{ the definition
of the function $E^\lambda_F$ is physically relevant for all states of the system $AB$ produced in real experiments.}
\end{remark}

\subsection{Variational expression for the function $E^{\lambda}_F$}

It is well known (cf.\cite{Ulm,H-SCI,H&Sh-5}) that\footnote{Here and in what follows we denote the state $|\psi\rangle\langle\psi|$ by $\psi$ for brevity.}
\begin{equation}\label{EV}
   E_F(\rho)
    =
    \sup_{\Lambda\in\mathcal{B}_{\mathrm{sa}}(\mathcal{H})}
    \inf_{\psi\in\mathcal{H}^1_{AB}}
    \left\{
        S(\psi_A)
        +
        \operatorname{Tr}\Lambda
        \bigl(\rho-\psi\bigr)
    \right\}=\sup_{\Lambda\in\mathfrak{D}\!\left(+\infty\right)}\Tr \Lambda\rho,
\end{equation}
where $\HH^1_{AB}$ is the unit sphere in $\HH_{AB}$ and
$$
\mathfrak{D}\!\left(+\infty\right)\doteq\left\{\Lambda \in\B_{\rm sa}(\HH_{AB})\,\left|\, \langle\psi|\Lambda|\psi\rangle
        \leq S(\psi_A)\;\, \forall\psi\in\HH^1_{AB}\right.\right\}.
$$
The first expression here is the double Fenchel transform of the
concave function\break $\rho\mapsto S(\rho_A)$ on $\St(\HH_{AB})$. So, this expression follows, by the Fenchel-Moreau theorem, from the fact that
$E_F$ is the convex closure of that function \cite{EM}. The second  expression is derived from the first one by the simple arguments  (similar to the ones that were used in the below proof of Proposition \ref{VEP}.)\medskip

The following proposition shows that we may treat $E_F^\lambda$ as a "truncated"
double Fenchel transform of the function
$\rho \mapsto S(\rho_A)$. In this proposition we write $D(\Lambda)$ for the diameter of the spectrum of an operator $\Lambda\in\B_{\rm sa}(\HH_{AB})$:
\begin{equation}\label{D-def}
D(\Lambda)=\max \mathrm{Sp}(\Lambda)-\min \mathrm{Sp}(\Lambda).
\end{equation}

\begin{proposition}\label{VEP}
\emph{For an arbitrary  state $\rho\in\St(\HH_{AB})$ and  $\lambda>0$ the following expressions hold
\begin{equation}\label{VE}
E_F^\lambda(\rho)= \max_{\Lambda\in \B\left(\!\frac{1}{2\lambda}\!\right)} \inf_{\psi\in\HH^1_{AB}}\left\{S(\psi_A)+\Tr \Lambda(\rho-\psi)\right\}= \max_{\Lambda\in \mathfrak{D}\left(\!\frac{1}{\lambda}\!\right)}\Tr \Lambda\rho,
\end{equation}
where $\,\B(r)\doteq\left\{\Lambda \in\B_{\rm sa}(\HH_{AB})\,\left|\,\|\Lambda\|\leq r\right.\right\}$, $\,\HH^1_{AB}\doteq\{\psi \in\HH_{AB}\,|\,\|\psi\|=1\}\,$ and}\footnote{ $D(\Lambda)$ is defined in (\ref{D-def}).}
$$
\mathfrak{D}(d)\doteq\left\{\Lambda \in\B_{\rm sa}(\HH_{AB})\,\left|\, D(\Lambda)\leq d,\;\langle\psi|\Lambda|\psi\rangle
        \leq S(\psi_A)\;\, \forall\psi\in\HH^1_{AB}\right.\right\}.
$$
\end{proposition}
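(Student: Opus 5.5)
The plan is to prove the two equalities in (\ref{VE}) separately. The first comes from a minimax argument. The second is an elementary reduction that turns the inner infimum into a constraint on $\Lambda$.

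\emph{First equality.} Since $\frac12\|X\|_1=\sup_{\Lambda\in\B(1/2)}\Tr\Lambda X$ for Hermitian trace-class $X$, we have $\frac{1}{2\lambda}\|\rho-\sigma\|_1=\sup_{\Lambda\in\B(\frac{1}{2\lambda})}\Tr\Lambda(\rho-\sigma)$. By the second expression in (\ref{EFA}),
$$
E_F^\lambda(\rho)=\inf_{\mu\in\P(\HH_{AB})}\sup_{\Lambda\in\B(\frac{1}{2\lambda})}\left\{S_A(\mu)+\Tr\Lambda(\rho-\bar\rho(\mu))\right\}.
$$
The function in braces is affine in $\mu$ and affine in $\Lambda$. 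The ball $\B(\frac{1}{2\lambda})$ is compact in the weak$^*$ topology $\sigma(\B,\T)$, and the function is weak$^*$-continuous in $\Lambda$ for fixed $\mu$. Hence the minimax theorem (Ky Fan/Sion type, with compactness on the sup side and no topology needed on the inf side) lets us exchange inf and sup, and the outer sup is attained. For fixed $\Lambda$ the infimum over $\mu$ equals $\inf_{\psi}\{S(\psi_A)+\Tr\Lambda(\rho-\psi)\}$. Indeed, the integrand is $\int(S(\psi_A)+\Tr\Lambda(\rho-\psi))\mu(d\psi)$, and the infimum over measures of the integral of a function equals the infimum of the function, as Dirac measures are included. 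Attainment of the max follows because $\Lambda\mapsto\inf_\psi\{\cdots\}$ is an infimum of weak$^*$-continuous affine functions, hence weak$^*$ upper semicontinuous on a compact set.

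\emph{Second equality.} Denote $c(\Lambda)=\inf_\psi\{S(\psi_A)-\langle\psi|\Lambda|\psi\rangle\}$, so the middle expression is $\max_{\Lambda\in\B(1/2\lambda)}(\Tr\Lambda\rho+c(\Lambda))$. Given such a $\Lambda$, I put $\Lambda'=\Lambda+c(\Lambda)I$. Then $\langle\psi|\Lambda'|\psi\rangle\le S(\psi_A)$ for all unit $\psi$, and $D(\Lambda')=D(\Lambda)\le 2\|\Lambda\|\le\frac1\lambda$. Here $c(\Lambda)$ is finite, since $S\ge0$ gives $c(\Lambda)\ge-\|\Lambda\|$, and product $\psi$ give $c(\Lambda)\le\|\Lambda\|$. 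So $\Lambda'\in\mathfrak{D}(\frac1\lambda)$ and $\Tr\Lambda'\rho$ equals the value at $\Lambda$; this proves "$\le$".

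Conversely, let $\Lambda\in\mathfrak{D}(\frac1\lambda)$ and $m=\frac12(\max\mathrm{Sp}\Lambda+\min\mathrm{Sp}\Lambda)$. Then $\Lambda''=\Lambda-mI$ satisfies $\|\Lambda''\|=\frac12D(\Lambda)\le\frac{1}{2\lambda}$. Moreover $c(\Lambda'')=c(\Lambda)+m\ge m$, because $c(\Lambda)\ge0$ by the constraint defining $\mathfrak{D}$. Hence $\Tr\Lambda''\rho+c(\Lambda'')\ge\Tr\Lambda\rho$, which gives "$\ge$" and shows that the max on the right is attained as well.

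\emph{Main obstacle.} The delicate step is justifying the minimax exchange, because $\P(\HH_{AB})$ is not compact and $S_A$ may take the value $+\infty$. I would use the version of the minimax theorem requiring only convexity in $\mu$, concavity in $\Lambda$, and compactness plus upper semicontinuity on the $\Lambda$ side. Measures with $S_A(\mu)=+\infty$ can be discarded, since they never help the infimum once we know $E_F^\lambda(\rho)\le 1/\lambda<\infty$ by Theorem \ref{P-1}A. If needed, I would first restrict to $\P(\HH^\rho_A\otimes\HH^\rho_B)$ by Corollary \ref{Edc} and approximate via Lemma \ref{hl}.
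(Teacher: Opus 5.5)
Your proposal is correct and follows essentially the paper's route. For the first equality you use trace-norm/operator-norm duality plus a minimax theorem exploiting weak$^*$ compactness of $\B(\frac{1}{2\lambda})$; for the second you shift $\Lambda$ by multiples of $I_{AB}$, which leaves $S(\psi_A)+\Tr\Lambda(\rho-\psi)$ unchanged. Your Dirac-measure argument for $\inf_{\mu}G(\mu,\Lambda)=\inf_{\psi}\{S(\psi_A)+\Tr\Lambda(\rho-\psi)\}$ and your explicit discarding of measures with $S_A(\mu)=+\infty$ before invoking Ky Fan's theorem simply make explicit the steps the paper delegates to its references.
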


\begin{remark}\label{D-r}
It is easy to see that the condition $\,\langle\psi|\Lambda|\psi\rangle
        \leq S(\psi_A)\;\, \forall\psi\in\HH^1_{AB}\,$ used in the definition of the set $\mathfrak{D}(d)$ is equivalent to the following one
$$
\Tr\Lambda\sigma\leq E_F(\sigma)\quad \forall\sigma\in\St(\HH_{AB}).
$$
\end{remark}

\emph{Proof.} To prove the first equality in (\ref{VE}) note that the second expression in (\ref{EFA}) implies
$$
E_F^\lambda(\rho)=\inf_{\mu\in\P(\HH_{AB})}\max_{\Lambda\in\B\left(\!\frac{1}{2\lambda}\!\right)}G(\mu,\Lambda)=\max_{\Lambda\in\B\left(\!\frac{1}{2\lambda}\!\right)}\inf_{\mu\in\P(\HH_{AB})}G(\mu,\Lambda),
$$
where  $G(\mu,\Lambda)\doteq S_A(\mu)+\Tr \Lambda(\rho-\bar{\rho}(\mu))$. The first equality here is obvious, the second one follows
from Theorem 3.1 in \cite{Sim}, since the function
$G(\mu,\Lambda)$ is affine in both arguments and continuous in $\Lambda$ w.r.t. the $\sigma$-weak operator topology on $\B_{\rm sa}(\HH_{AB})$, while the ball $\B\!\left(\frac{1}{2\lambda}\right)$ is
compact in this topology.\smallskip

Thus, to prove (\ref{VE}) it suffices to show that
$$
\inf_{\mu\in\P(\HH_{AB})}G(\mu,\Lambda)=\inf_{\psi\in\HH^1_{AB}}H(\psi,\Lambda),\quad \textrm{where} \quad H(\psi,\Lambda)\doteq S(\psi_A)+\Tr \Lambda(\rho-\psi).
$$
This can be done by the way described in  the proof of the implication $\,\rm (i)\Rightarrow(iii)\,$ in Theorem 1 in \cite{H&Sh-5}.

The second equality in (\ref{VE}) is easily derived from the first one by noting  that
$\,H(\psi,\Lambda)=H(\psi,\Lambda+cI_{AB})\,$ for any $c\in\mathbb{R}$. Indeed, the last property implies
\begin{equation}\label{tpt}
\begin{aligned}
\max_{\Lambda\in \B(\frac{1}{2\lambda})}
\inf_{\psi\in\mathcal{H}^1_{AB}}
H(\Lambda,\psi)
&=
\max\left\{
    \operatorname{Tr}\Lambda\rho
    \,\middle|\,
    \Lambda\in \B_\lambda,\;
    g(\Lambda)=0
\right\},
\end{aligned}
\end{equation}
where
\[
    \B_\lambda
    =
    \left\{
        \Lambda\in\B_{\mathrm{sa}}(\mathcal{H}_{AB})
        \,\left|\,
        D(\Lambda)\leq \frac{1}{\lambda}
    \right.\right\}
\]
and
\[
    g(\Lambda)=\inf_{\psi\in\mathcal{H}^1_{AB}}\left\{S(\psi_A)-\langle\psi|\Lambda|\psi\rangle
    \right\}.
\]

The r.h.s. of (\ref{tpt}) is not greater than the last expression in (\ref{VE}),
since $g(\Lambda)=0$ implies that
\[
    \langle\psi|\Lambda|\psi\rangle
    \leq S(\psi_A)
    \qquad
    \forall\,\psi\in\HH^1_{AB}.
\]

Denote the r.h.s. of (\ref{tpt}) by $X$ and assume that
\[
    \sup\left\{
        \operatorname{Tr}\Lambda\rho
        \,\middle|\,
        \Lambda\in \B_\lambda,\;
        g(\Lambda)\geq 0
    \right\}>X.
\]
Then there is $\Lambda\in \B_\lambda$ such that $g(\Lambda)=\Delta>0$ and
$\operatorname{Tr}\Lambda\rho>X$. But in this case
\[
  \Lambda+\Delta I_{AB}\in\B_\lambda,\quad   g(\Lambda+\Delta I_{AB})=0\quad \textrm{and}\quad
    \operatorname{Tr}(\Lambda+\Delta I_{AB})\rho
    =
    \operatorname{Tr}\Lambda\rho+\Delta,
\]
contradicting our assumption.  $\Box$\smallskip

Variational expression  (\ref{VE}) can be used to obtain alternative proofs of basic properties of the function $E_F^\lambda$
established in Section 3.1.

For example, to prove the nonselective LOCC-monotonicity of function $E_F^\lambda$ for a given $\lambda>0$  take  arbitrary state $\rho\in\St(\HH_{AB})$ and
assume that $\Phi$ is a channel describing a given nonselective LOCC-operation in the system $AB$. Let
$\Lambda$ be an optimal operator for the state $\Phi(\rho)$ (w.r.t. expression  (\ref{VE})), i.e. such an operator in $\B_{\rm sa}(\HH_{AB})$ that
\begin{equation}\label{Lp}
E_F^\lambda(\Phi(\rho))=\Tr \Lambda\Phi(\rho),\quad \Tr \Lambda\sigma\leq E_F(\sigma)\quad \forall\sigma\in\St(\HH_{AB})\quad \textrm{and} \quad D(\Lambda)\leq \frac{1}{\lambda}.
\end{equation}
Consider the operator $\Phi^*(\Lambda)$, where $\Phi^*$ is the dual map to the channel $\Phi$ \cite{H-SCI,Wilde}. Then we have
\begin{equation}\label{Lp+}
\Tr\Phi^*(\Lambda)\sigma=\Tr\Lambda\Phi(\sigma)\leq E_F(\Phi(\sigma))\leq E_F(\sigma)\quad \forall\sigma\in\St(\HH_{AB}),
\end{equation}
where the first inequality holds by the second property in (\ref{Lp}) and the second one is due to the LOCC-monotonicity of the EoF.
Since the map $\Phi^*$ is unital and does not increase the operator norm, it is easy to prove that $D(\Phi^*(\Lambda))\leq \frac{1}{\lambda}$. This and (\ref{Lp+}) show, by Remark \ref{D-r},
that the operator $\Phi^*(\Lambda)$ belongs to that set $\mathfrak{D}(d)$. Hence,  the expression  (\ref{VE}) implies that
$$
E_F^\lambda(\rho)\geq \Tr\Phi^*(\Lambda)\rho=\Tr\Lambda\Phi(\rho)=E_F^\lambda(\Phi(\rho)).
$$

Note  that variational expressions (\ref{EV}) and (\ref{VE}) directly imply claim $C$ of Theorem \ref{P-1}.
Variational expressions  (\ref{VE}) can be also used to prove the Lipschitz continuity bound for the function $E_F^\lambda$ in claim $B$ of Theorem \ref{P-1}.

\subsection{Subadditivity of the function $E^{\lambda}_F(\rho)$ under tensor products and beyond}

Note first that the functional $E_F^\lambda$ inherits the subadditivity property from the EoF.\smallskip

\begin{proposition}\label{sd} \emph{ Let $A_1$, $A_2$, $B_1$ and $B_2$ be quantum systems of any dimensions. Let $\lambda>0$.
For arbitrary states $\rho$ and $\sigma$ of the systems $A_1B_1$ and  $A_2B_2$ the following inequality holds
\begin{equation}\label{eq:EF-lambda-subadditivity}
E_F^\lambda(\rho\otimes\sigma)
\leq
E_F^\lambda(\rho)+E_F^\lambda(\sigma),
\end{equation}
where $\rho\otimes\sigma$ is treated as a state of the bipartite system $(A_1A_2)(B_1B_2)$.}
\end{proposition}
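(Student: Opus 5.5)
Since both sides of (\ref{eq:EF-lambda-subadditivity}) are defined by infima, the natural plan is to take near-optimal states in the first expression of (\ref{EFA}) for $\rho$ and $\sigma$ separately and use their tensor product as a trial state for $\rho\otimes\sigma$. We may assume $E_F^\lambda(\rho)$ and $E_F^\lambda(\sigma)$ are finite; by Theorem \ref{P-1}A they are always finite. Given $\varepsilon>0$, choose states $\omega_1\in\St(\HH_{A_1B_1})$ and $\omega_2\in\St(\HH_{A_2B_2})$ such that
$$
E_F(\omega_1)+\frac{1}{2\lambda}\|\rho-\omega_1\|_1\leq E_F^\lambda(\rho)+\varepsilon,\qquad E_F(\omega_2)+\frac{1}{2\lambda}\|\sigma-\omega_2\|_1\leq E_F^\lambda(\sigma)+\varepsilon .
$$
Then, by the definition (\ref{EFA}) applied to the bipartite system $(A_1A_2)(B_1B_2)$,
$$
E_F^\lambda(\rho\otimes\sigma)\leq E_F(\omega_1\otimes\omega_2)+\frac{1}{2\lambda}\|\rho\otimes\sigma-\omega_1\otimes\omega_2\|_1 .
$$

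Two ingredients are needed. The first is subadditivity of the EoF under tensor products,
$$
E_F(\omega_1\otimes\omega_2)\leq E_F(\omega_1)+E_F(\omega_2).
$$
In the continuous-ensemble setting (\ref{EF}), take optimal measures $\mu_1\in\P_{\omega_1}$ and $\mu_2\in\P_{\omega_2}$; these exist since the infimum is attained. Let $\mu$ be the image of $\mu_1\times\mu_2$ under the continuous map $(\varrho_1,\varrho_2)\mapsto\varrho_1\otimes\varrho_2$, which sends pure states to pure states. The barycenter of $\mu$ is $\omega_1\otimes\omega_2$, by Fubini for Bochner integrals. Moreover $S([\varrho_1\otimes\varrho_2]_{A_1A_2})=S([\varrho_1]_{A_1})+S([\varrho_2]_{A_2})$, so $S_{A_1A_2}(\mu)=S_{A_1}(\mu_1)+S_{A_2}(\mu_2)$. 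This tensor-product-of-measures step is the only point needing some care in infinite dimensions: one must check measurability and the barycenter identity. I would argue it by testing against $\Tr X(\cdot)$ for product operators $X=X_1\otimes X_2$ and then using linearity and density.

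The second ingredient is the standard telescoping bound
$$
\|\rho\otimes\sigma-\omega_1\otimes\omega_2\|_1\leq\|(\rho-\omega_1)\otimes\sigma\|_1+\|\omega_1\otimes(\sigma-\omega_2)\|_1=\|\rho-\omega_1\|_1+\|\sigma-\omega_2\|_1 ,
$$
which uses the multiplicativity of the trace norm on tensor products and the fact that states have unit trace norm.

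Combining the two ingredients gives $E_F^\lambda(\rho\otimes\sigma)\leq E_F^\lambda(\rho)+E_F^\lambda(\sigma)+2\varepsilon$, and letting $\varepsilon\to0$ finishes the proof. Subadditivity of $E_F$ in the continuous setting is the only nontrivial point. If one prefers to avoid measures, there is an alternative: use Theorem \ref{P-1}H to choose discrete near-optimal ensembles $\{p_i,\rho_i\}$ and $\{q_j,\sigma_j\}$ in (\ref{EFA+}) and take the product ensemble $\{p_iq_j,\rho_i\otimes\sigma_j\}$ directly in (\ref{EFA+}) for $\rho\otimes\sigma$. Everything is then a countable sum, and the same trace-norm bound applies to the average states.
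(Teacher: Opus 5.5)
Your proof is correct and follows the paper's argument: subadditivity of $E_F$ combined with the bound $\|\rho\otimes\sigma-\rho'\otimes\sigma'\|_1\leq\|\rho-\rho'\|_1+\|\sigma-\sigma'\|_1$, followed by passing to the infimum. The paper takes arbitrary $\rho',\sigma'$ rather than $\varepsilon$-optimal states, which is equivalent. Your push-forward-of-$\mu_1\times\mu_2$ sketch justifies the subadditivity of $E_F$ in the continuous-ensemble setting, a step the paper simply cites as a known property.
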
\smallskip

\emph{Proof.} Let $\rho'$ and $\sigma'$ be arbitrary states of the systems $A_1B_1$ and  $A_2B_2$. By the
subadditivity of the EoF,
\begin{equation}
E_F(\rho'\otimes\sigma')
\leq
E_F(\rho')+E_F(\sigma').
\label{eq:EF-subadditivity}
\end{equation}
On the other hand, the trace norm satisfies
\begin{equation}
\|\rho\otimes\sigma-\rho'\otimes\sigma'\|_1
\leq
\|\rho\otimes\sigma-\rho'\otimes\sigma\|_1
+
\|\rho'\otimes\sigma-\rho'\otimes\sigma'\|_1.
\end{equation}
Using $\,\rho\otimes\sigma-\rho'\otimes\sigma=(\rho-\rho')\otimes\sigma\,$
and
$\,\rho'\otimes\sigma-\rho'\otimes\sigma'=\rho'\otimes(\sigma-\sigma')$,
together with $\,\|\alpha\otimes \beta\|_1=\|\alpha\|_1\|\beta\|_1$, we obtain
\begin{equation*}
\|\rho\otimes\sigma-\rho'\otimes\sigma'\|_1
\leq
\|\rho-\rho'\|_1+\|\sigma-\sigma'\|_1.
\end{equation*}

Consequently,
\begin{equation*}
\begin{array}{c}
E_F(\rho'\otimes\sigma')
+\frac{1}{2\lambda}
\|\rho\otimes\sigma-\rho'\otimes\sigma'\|_1
\leq
E_F(\rho')+E_F(\sigma')
+\frac{1}{2\lambda}\|\rho-\rho'\|_1
+\frac{1}{2\lambda}\|\sigma-\sigma'\|_1
\\\\=
\left[
E_F(\rho')+\frac{1}{2\lambda}\|\rho-\rho'\|_1
\right]
+
\left[
E_F(\sigma')+\frac{1}{2\lambda}\|\sigma-\sigma'\|_1
\right].
\end{array}
\end{equation*}

Since $\rho'$ and $\sigma'$ are arbitrary, taking the infimum over
$\rho'$ and $\sigma'$ on the right-hand side gives
$$
E_F^\lambda(\rho\otimes\sigma)
\leq
\inf_{\rho'}
\left\{
E_F(\rho')+\frac{1}{2\lambda}\|\rho-\rho'\|_1
\right\}
+
\inf_{\sigma'}
\left\{
E_F(\sigma')+\frac{1}{2\lambda}\|\sigma-\sigma'\|_1
\right\}=E_F^\lambda(\rho)+E_F^\lambda(\sigma),
$$
which proves (\ref{eq:EF-lambda-subadditivity}). $\Box$\smallskip

There is also a useful lower bound on $E_F^\lambda(\rho\otimes\sigma)$. Since the partial trace is a local
quantum operation and $E_F^\lambda$ is monotone under such operations by Theorem \ref{P-1}F,
tracing out either factor gives
\begin{equation*}
E_F^\lambda(\rho)
\leq
E_F^\lambda(\rho\otimes\sigma)
\end{equation*}
and, similarly,
\begin{equation*}
E_F^\lambda(\sigma)
\leq
E_F^\lambda(\rho\otimes\sigma).
\end{equation*}
Therefore,
\begin{equation*}
\max\left\{
E_F^\lambda(\rho),E_F^\lambda(\sigma)
\right\}
\leq
E_F^\lambda(\rho\otimes\sigma)
\leq
E_F^\lambda(\rho)+E_F^\lambda(\sigma).
\end{equation*}

In particular, if $\sigma$ is separable, then
$E_F^\lambda(\sigma)=0$, and hence
\begin{equation*}
E_F^\lambda(\rho\otimes\sigma)
=E_F^\lambda(\rho).
\end{equation*}

Proposition \ref{sd} implies that
\begin{equation}
E_F^\lambda(\rho^{\otimes n})
\leq
n E_F^\lambda(\rho),
\qquad n\in\mathbb N.
\label{eq:tensor-power-subadditivity}
\end{equation}
So, Fekete's lemma allows us to define the regularized version of $E_F^\lambda$ by the expression
\begin{equation}
E_F^{\lambda,\infty}(\rho)\doteq\lim_{n\to\infty}
\frac{1}{n}E_F^\lambda(\rho^{\otimes n})=
\inf_{n\geq1}
\frac{1}{n}E_F^\lambda(\rho^{\otimes n}).
\label{eq:Fekete}
\end{equation}
Unfortunately, Theorem \ref{P-1}A implies that $E_F^{\lambda,\infty}(\rho)=0$ for any state $\rho$, so we cannot
assume that $E_F^{\lambda,\infty}$ tends to $E_F^{\infty}$ as $\lambda\to0^+$. \medskip

Note, finally,  that \emph{the function $E_F^\lambda$ is not additive}: there exist bipartite states $\rho$ and $\sigma$ such that
$$
E_F^\lambda(\rho\otimes\sigma)
<
E_F^\lambda(\rho)+E_F^\lambda(\sigma)
$$
for all $\lambda$ small enough.  This can be shown  easily by using
the nonadditivity of the EoF (proved by Hastings in \cite{Hast} using the Shor theorem \cite{Shor}) and claim C of Theorem \ref{P-1}.

\section{Upper bounds on $E_F(\rho)-E^{\lambda}_F(\rho)$}

By Theorem \ref{P-1}C the family $\left\{E_F^\lambda\right\}_{\lambda>0}$
of MY-approximations pointwise converges to the function
$E_F$ as $\lambda\to0^+$. In this section we show how to obtain
easily computable estimates for the rate of this convergence.
For this purpose we use the (lower) semicontinuity bounds for the function $E_F$
obtained in~\cite{LCB+}.\smallskip

The below results are based on the following lemma.
\newpage

\begin{lemma}\label{vil}
\emph{Let $A$ and $B$ be quantum systems of any dimensions.
Let $\rho$ be a state of the system $AB$.}

\medskip

A) \emph{If there is a function $B_T$ on $[0,1]$ such that
\[
    E_F(\rho)-E_F(\sigma)
    \leq
    B_T(\varepsilon)
    \qquad
\]
for all  $\,\sigma\in\St(\mathcal{H}_{AB})$ such that $\,\frac{1}{2}\|\rho-\sigma\|_1\leq\varepsilon\,$ then
\begin{equation}\label{st1}
   E_F(\rho)-E_F^\lambda(\rho)
    \leq
    \sup_{\varepsilon\in[0,1]}
    \left\{B_T(\varepsilon)-\frac{\varepsilon}{\lambda}\right\}.
    \end{equation}
If $\,B_T(\varepsilon)\to0\,$ as $\,\varepsilon\to0^+$, then the r.h.s.
of (\ref{st1}) tends to zero as $\,\lambda\to0^+$.}
\medskip

B) \emph{If there is a function $B_F$ on $[0,1]$ such that
\[
    E_F(\rho)-E_F(\sigma)\leq B_F(\delta)
    \qquad
\]
for all  $\,\sigma\in\St(\mathcal{H}_{AB})$ such that $\,\sqrt{1-F(\rho,\sigma)}\leq\delta ,$ where $\,F(\rho,\sigma)\doteq\|\sqrt{\rho}\sqrt{\sigma}\|_1^2\,$ is the fidelity between $\rho$ and $\sigma$, then
\begin{equation}\label{st2}
    E_F(\rho)-E_F^\lambda(\rho)
    \leq
    \sup_{\delta\in[0,1]}
    \left\{
        B_F(\delta)
        -\frac{1}{\lambda}
        \left(
            1-\sqrt{1-\delta^2}
        \right)
    \right\}.
\end{equation}
If $\,B_F(\delta)\to0\,$ as $\,\delta\to0^+$, then the r.h.s.
of (\ref{st2}) tends to zero as $\,\lambda\to0^+$.}
\end{lemma}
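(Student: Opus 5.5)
The plan is to work directly from the first expression in (\ref{EFA}), written as $E_F^\lambda(\rho)=\inf_\sigma\{E_F(\sigma)+\frac{1}{\lambda}\varepsilon(\sigma)\}$ with $\varepsilon(\sigma)\doteq\frac12\|\rho-\sigma\|_1\in[0,1]$. Then
$$
E_F(\rho)-E_F^\lambda(\rho)=\sup_{\sigma}\left\{E_F(\rho)-E_F(\sigma)-\frac{\varepsilon(\sigma)}{\lambda}\right\}.
$$
We may assume $E_F(\rho)<+\infty$; otherwise the hypothesis of A, taken with $\sigma$ separable, forces $B_T\equiv+\infty$ on the relevant range, and the claim is trivial. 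For each fixed $\sigma$ we apply the hypothesis with $\varepsilon=\varepsilon(\sigma)$, which gives $E_F(\rho)-E_F(\sigma)-\varepsilon(\sigma)/\lambda\le B_T(\varepsilon(\sigma))-\varepsilon(\sigma)/\lambda$. The right-hand side is bounded by the supremum in (\ref{st1}), and taking the supremum over $\sigma$ yields (\ref{st1}).

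For B the argument is the same, with a different lower bound on the penalty term. Set $\delta(\sigma)\doteq\sqrt{1-F(\rho,\sigma)}\in[0,1]$. The Fuchs--van de Graaf inequality $\frac12\|\rho-\sigma\|_1\ge 1-\sqrt{F(\rho,\sigma)}=1-\sqrt{1-\delta(\sigma)^2}$ gives
$$
E_F(\rho)-E_F(\sigma)-\frac{1}{2\lambda}\|\rho-\sigma\|_1\le B_F(\delta(\sigma))-\frac1\lambda\left(1-\sqrt{1-\delta(\sigma)^2}\right).
$$
Taking the supremum over $\sigma$ then gives (\ref{st2}).

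The convergence statements are an elementary real-analysis step, and this is the only point that needs some care. We should assume $B_T$ is bounded above, say by $M$. This holds in practice: if $E_F(\rho)<\infty$, we may replace $B_T$ by $\min\{B_T,E_F(\rho)\}$, which is still a valid bound because $E_F\ge0$. Given $\eta>0$, pick $\varepsilon_0>0$ with $B_T(\varepsilon)\le\eta$ for $\varepsilon\le\varepsilon_0$. For $\varepsilon\le\varepsilon_0$ the expression in (\ref{st1}) is then at most $\eta$. For $\varepsilon>\varepsilon_0$ it is at most $M-\varepsilon_0/\lambda$, which is $\le\eta$ once $\lambda\le\varepsilon_0/M$. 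Hence the supremum is $\le\eta$ for small $\lambda$; it is also $\ge0$, since it is at least the value at $\varepsilon=0$ and we may take $B_T(0)\ge0$. B is handled identically, using that $1-\sqrt{1-\delta^2}\ge\delta^2/2>0$ for $\delta>\delta_0$.

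I expect no real obstacle. The only subtle points are the boundedness and finiteness of $E_F(\rho)$ needed for the limit statement, and applying the fidelity inequality in the correct direction.
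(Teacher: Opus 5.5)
Your argument for (\ref{st1}) and (\ref{st2}) is exactly the paper's: you write $E_F(\rho)-E_F^\lambda(\rho)$ as the supremum over $\sigma$ coming from (\ref{EFA}), apply the hypothesis at $\varepsilon=\frac12\|\rho-\sigma\|_1$ (resp.\ $\delta=\sqrt{1-F(\rho,\sigma)}$), and in B use $\frac12\|\rho-\sigma\|_1\ge 1-\sqrt{F(\rho,\sigma)}$. The paper gives no argument for the limit statements, and you are right that they need $B_T$ (resp.\ $B_F$) bounded above, which truncation at $E_F(\rho)$ arranges; your step ``$B_T(\varepsilon)\le\eta$ for $\varepsilon\le\varepsilon_0$'' additionally requires $B_T(0)=0$, not just $B_T(0)\ge0$, because the hypothesis $B_T(\varepsilon)\to0$ only concerns $\varepsilon\to0^+$ while the supremum includes $\varepsilon=0$, and you may impose this since the constraint at $\varepsilon=0$ (resp.\ $\delta=0$) involves only $\sigma=\rho$.
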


\bigskip

\textit{Proof.} A) By the definition of $E_F^\lambda$ we have
\[
    E_F(\rho)-E_F^\lambda(\rho)
    =
    \sup_{\sigma\in\St(\HH_{AB})}
    \left\{
        E_F(\rho)-E_F(\sigma)
        -\frac{1}{2\lambda}\|\rho-\sigma\|_1
    \right\}\leq\sup_{\varepsilon\in[0,1]}
    \left\{B_T(\varepsilon)
        -\frac{\varepsilon}{\lambda}
    \right\}.
\]
B) To prove claim B it suffices to use the above arguments along with  the inequality (cf. \cite{H-SCI,Wilde})
$$
\frac{1}{2}\|\rho-\sigma\|_1\geq 1-\sqrt{F(\rho,\sigma)}.
$$
$\Box$

\medskip

\begin{example}\label{new}
Assume that $\,\rho$ is a  state in $\St(\HH_{AB})$ such that (\ref{lsc}) holds for some finite $\lambda_\rho>0$.  By applying Lemma \ref{vil}A  we conclude that
\begin{equation*}
    E_F(\rho)-E_F^\lambda(\rho)
    \leq
    \max_{x\in[0,1]}   \left\{\left(\frac{1}{\lambda_\rho}-\frac{1}{\lambda}\right)x\right\}=
\begin{cases}
0 & \textup{if }\;\lambda\leq \lambda_\rho\\[2mm]
\frac{1}{\lambda_\rho}-\frac{1}{\lambda} &\textup{if }\; \lambda>\lambda_\rho
\end{cases}.
\end{equation*}
This bound agrees with claim D of Theorem \ref{P-1}.
\end{example}

By using claim B of Lemma \ref{vil} and the semicontinuity bounds for the function $E_F$
from  Proposition 9 in \cite{AOE} and Proposition 10 in \cite{LCB+} modified according to Remark  15 in \cite{LCB+} we obtain the following

\smallskip

\begin{proposition}\label{P-2} \emph{Let $A$ and $B$ be quantum systems of any dimensions.
Let $\rho$ be a state of the system $AB$. Let $\,h$ be the binary entropy and $\,h^\uparrow$ its non-decreasing
envelope defined in (\ref{h+}).}
\smallskip

A) \emph{If $\,2\leq\rank \rho_A = d <+\infty,$ then}
\begin{equation}\label{pt-1}
    E_F(\rho)-E_F^\lambda(\rho)
    \leq
    \max_{x\in(0,1-1/d\shs]}
    \left\{
        x\ln(d-1)
        +h(x)
        -\frac{1}{\lambda}
        \left(1-\sqrt{1-x^2}\right)
    \right\}.
\end{equation}\medskip

B) \emph{If $\,\Tr H\rho_A = E < +\infty\,$ for some positive operator $H$ on $\HH_A$ satisfying conditions (\ref{H-cond}) and (\ref{star}), then
\begin{equation}\label{pt-2}
    E_F(\rho)-E_F^\lambda(\rho)
    \leq
    \max_{x\in(0,1]}
    \left\{
        x F_H\!\left(\frac{E}{x}\right)
        + h^\uparrow(x)
        - \frac{1}{\lambda}
          \left(1-\sqrt{1-x^2}\right)
    \right\}.
\end{equation}
where $F_H$ is the function defined in (\ref{F-def})}.\medskip

\emph{The r.h.s. of (\ref{pt-1}) and (\ref{pt-2}) tends to zero as $\lambda\to0^+$.\footnote{The r.h.s. of (\ref{pt-2}) tends to zero as $\lambda\to0^+$ due to the equivalence of (\ref{H-cond}) and (\ref{H-cond-a}).}
}\end{proposition}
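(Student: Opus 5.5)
The plan is to apply Lemma \ref{vil}B with $B_F$ taken from the known fidelity-based semicontinuity bounds for $E_F$: Proposition 9 in \cite{AOE} for part A and Proposition 10 in \cite{LCB+}, modified as in Remark 15 of \cite{LCB+}, for part B. Concretely, I take the following from those results.

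In case A, if $\rank\rho_A=d$, then for every $\sigma$ with $\sqrt{1-F(\rho,\sigma)}\leq\delta$,
$$
E_F(\rho)-E_F(\sigma)\leq B_F(\delta)\doteq \delta\ln(d-1)+h(\delta)\quad\text{for }\delta\leq 1-1/d,
$$
extended to $\delta>1-1/d$ by the value $\ln d$.

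In case B, the corresponding bound is
$$
B_F(\delta)=\delta F_H(E/\delta)+h^\uparrow(\delta).
$$
These are one-sided (lower semicontinuity) bounds that depend only on $\rho$. That is what makes them applicable to $\sigma$ of arbitrary rank or energy.

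Next I substitute these $B_F$ into (\ref{st2}) and simplify the supremum over $\delta\in[0,1]$.

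In case A, consider $\delta>1-1/d$. There $B_F$ equals the constant $\ln d$, while the penalty $\frac1\lambda(1-\sqrt{1-\delta^2})$ is increasing. So the supremum over $(1-1/d,1]$ is at most the value at the endpoint $\delta=1-1/d$, where $x\ln(d-1)+h(x)$ equals $\ln d$ by continuity. At $\delta=0$ the expression is $0$, and it is bounded above by the values at small positive $\delta$. Hence the supremum reduces to the maximum over $(0,1-1/d]$, and this maximum is attained by continuity.

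In case B, the expression is continuous on $(0,1]$. As $\delta\to0^+$ it tends to $0$, because $xF_H(E/x)=E\cdot F_H(E/x)/(E/x)\to0$ by (\ref{H-cond-a}). So the supremum over $[0,1]$ equals the maximum over $(0,1]$, or is approached near $0$ with value $0$; in either case it is bounded by the stated maximum. The case $E=0$ is trivial, since then $\rho_A$ is supported on the ground space; I would handle it separately or by a limit.

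Finally, the vanishing as $\lambda\to0^+$ follows from the last assertion of Lemma \ref{vil}B once $B_F(\delta)\to0$ as $\delta\to0^+$. In case A this holds because $h$ is continuous with $h(0)=0$. In case B it follows from (\ref{H-cond-a}) as above together with $h^\uparrow(0)=0$.

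I expect the main obstacle to be matching the exact form of the cited semicontinuity bounds. They are stated in terms of $\varepsilon$ with fidelity-type closeness and a specific range of $\varepsilon$, and Remark 15 of \cite{LCB+} removes a factor or replaces $H$-energy assumptions by condition (\ref{star}). So I would first check carefully that the inequality $E_F(\rho)-E_F(\sigma)\leq B_F(\delta)$ holds with exactly these $B_F$ for all $\sigma$ in the fidelity ball. If the cited bound instead requires $\delta<1-1/d$ strictly, or involves the trace distance, I would adjust the range and use the monotonic extension, or fall back on Lemma \ref{vil}A.
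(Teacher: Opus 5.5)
Your proposal is correct and follows the paper's route: the paper obtains this proposition in one line, by inserting the fidelity-type semicontinuity bounds (Proposition 9 in \cite{AOE} and Proposition 10 in \cite{LCB+}, modified per Remark 15 there) as $B_F$ into Lemma \ref{vil}B. Your reduction of the supremum over $[0,1]$ to the stated maxima is a correct filling-in of details the paper leaves implicit. It uses $E_F(\rho)\le S(\rho_A)\le\ln d$ beyond $1-1/d$, positivity of the expression near $0^+$, and boundedness of $B_F$ for the limit $\lambda\to0^+$.
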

\medskip

\textbf{Note A:} Claim B of Proposition \ref{P-2} is applicable to any state $\rho$ in $\St(\HH_{AB})$ such that $\,S(\rho_X)<+\infty,$  where $X$ is either $A$ or $B$, since for any such state there exists positive operators $H$ on $\HH_X$ satisfying  conditions (\ref{H-cond}) and (\ref{star}) such that $\,\Tr H\rho_X<+\infty$  \cite[Proposition 4]{EC}.\medskip

\textbf{Note B:} Using claim A of Lemma \ref{vil} and the original (non-modified) semicontinuity bounds for the function $E_F$
from Proposition 9 in \cite{AOE} and Proposition 10 in \cite{LCB+} one can obtain analogues of the bounds (\ref{pt-1}) and (\ref{pt-2}), but numerical analysis shows that these analogues
are less accurate than the bounds (\ref{pt-1}) and (\ref{pt-2}). \medskip

Proposition \ref{P-2} imply the following upper bounds on the rate of uniform convergence of
the MY-approximation  $E_F^\lambda$  to the EoF as $\lambda\to0^+$ on the set of states with bounded rank/energy  of one of the marginal states.\smallskip

\begin{corollary}\label{C-1} \emph{The family $\{E^\lambda_F\}_{\lambda\geq 0}$
of MY-approximations uniformly converges to the function $E_F$ on
the sets
\[
    \C^A_d
    =
    \{\rho\in\St(\mathcal{H}_{AB})\, |\,
    \rank\rho_A \leq d<+\infty\}
\]
and
\[
    \C^A_{H,E}
    =
    \{\rho\in\St(\mathcal{H}_{AB})\, |\,
    \Tr H\rho_A \leq E\},
\]
where $H$ is any positive operator on $\HH_A$ satisfying  conditions (\ref{H-cond}) and (\ref{star}). Moreover,
\begin{equation}\label{ub1}
    \sup_{\rho\in \C^A_d}
    \left|E_F(\rho)-E^\lambda_F(\rho)\right|
    \leq  \sup_{x\in(0,1-1/d]}
    \left\{
        x\ln(d-1)
        +h(x)
        -\frac{1}{\lambda}
        \left(1-\sqrt{1-x^2}\right),
    \right\}
\end{equation}
where $h$ is the binary entropy, and
\begin{equation}\label{ub2}
    \sup_{\rho\in\C^A_{H,E}}
    \left|E_F(\rho)-E^\lambda_F(\rho)\right|
    \leq  \sup_{x\in(0,1]}
    \left\{
        x F_H\!\left(\frac{E}{x}\right)
        + h^\uparrow(x)
        - \frac{1}{\lambda}
          \left(1-\sqrt{1-x^2}\right)
    \right\},
\end{equation}
where $F_H$ is the function defined in (\ref{F-def}) and $\,h^\uparrow$ is the function defined in (\ref{h+}).}
\smallskip

\emph{The r.h.s. of (\ref{ub1}) and (\ref{ub2}) tends to zero as $\lambda\to0^+$.}
\end{corollary}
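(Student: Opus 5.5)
The plan is to derive Corollary \ref{C-1} pointwise from Proposition \ref{P-2}. The point is that the right-hand sides of (\ref{pt-1}) and (\ref{pt-2}) depend on $\rho$ only through $d=\rank\rho_A$ or $E=\Tr H\rho_A$, and they are monotone in these parameters.

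First, recall from Theorem \ref{P-1}C that $0\le E_F(\rho)-E_F^\lambda(\rho)$, so the absolute values in (\ref{ub1}) and (\ref{ub2}) can be dropped.

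For $\rho\in\C^A_d$ with $d'=\rank\rho_A\le d$, proceed by cases.
\begin{itemize}
\item If $d'=1$, then $\rho$ is a product state with pure $\rho_A$, so $E_F(\rho)=E^\lambda_F(\rho)=0$.
\item If $d'\ge2$, apply (\ref{pt-1}) with $d'$. The function under the max, $x\ln(d'-1)+h(x)-\frac1\lambda(1-\sqrt{1-x^2})$, is nondecreasing in $d'$. Its domain $(0,1-1/d']$ is contained in $(0,1-1/d]$. Hence the bound with $d'$ is at most the right-hand side of (\ref{ub1}).
\end{itemize}
Taking the supremum over $\rho\in\C^A_d$ gives (\ref{ub1}). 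If $d=1$, the set $\C^A_d$ is trivial; read the bound as $0$ with the convention $\ln 0\cdot 0$ on an empty range.

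For $\rho\in\C^A_{H,E}$, apply (\ref{pt-2}) with $E'=\Tr H\rho_A\le E$. Since $F_H$ is nondecreasing (it is a supremum over the growing sets $\C_{H,E}$), we have $xF_H(E'/x)\le xF_H(E/x)$. This gives (\ref{ub2}). The case $E'=0$ is handled by the same monotonicity, or directly, since then $\rho_A$ is supported in $\ker H$. Under (\ref{star}) and the Gibbs condition this kernel is finite dimensional, so $F_H(0)$ is finite and the bound still applies.

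The remaining step, which I expect to be the main obstacle, is to show that the suprema in (\ref{ub1}) and (\ref{ub2}) tend to zero as $\lambda\to0^+$. This gives uniform convergence. Write $\phi(x)$ for the nonnegative $\lambda$-independent part, $x\ln(d-1)+h(x)$ or $xF_H(E/x)+h^\uparrow(x)$. Note that $1-\sqrt{1-x^2}\ge x^2/2$.

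The key fact is $\phi(x)\to0$ as $x\to0^+$. For the first bound this is clear. For the second it uses $F_H(E/x)=o(1/x)$, which is the equivalence of (\ref{H-cond}) and (\ref{H-cond-a}). Moreover $\phi$ is bounded on $(0,1]$, say by $M=F_H(E)+\ln 2$ or by $\ln d$.

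Given $\epsilon>0$, choose $\delta$ with $\phi\le\epsilon$ on $(0,\delta]$. On $[\delta,1]$ the expression is at most $M-\frac{\delta^2}{2\lambda}$, which is $\le\epsilon$ once $\lambda\le\delta^2/(2M)$. Hence the supremum is $\le\epsilon$ for all small $\lambda$. This is the same argument that underlies the last assertion of Proposition \ref{P-2}, now made explicit and uniform because the bound is independent of $\rho$ within each set.
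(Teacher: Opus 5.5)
Your proposal is correct and takes the same route as the paper, which obtains Corollary~\ref{C-1} directly from Proposition~\ref{P-2}. You make explicit the steps the paper leaves implicit: the monotonicity of the bounds in $\rank\rho_A$ (through $\ln(d'-1)\le\ln(d-1)$ and the nested domains) and in $\Tr H\rho_A$ (through monotonicity of $F_H$), the degenerate cases, and the uniform $\varepsilon$--$\delta$ argument that the suprema vanish as $\lambda\to0^+$ (using $1-\sqrt{1-x^2}\ge x^2/2$ and $F_H(y)=o(y)$), where $d=1$ should simply be excluded since the statement implicitly assumes $d\ge2$.
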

\medskip\medskip

\begin{example}\label{1-exam}  It seems reasonable to characterize the accuracy of the uniform approximation of the EoF by the function $E^\lambda_F$ on the set $\C^A_{d}$ by the relative error
\begin{equation*}
{\rm re}_{d}(\lambda)\doteq\frac{\sup_{\rho\in \C^A_d}\left|E_F(\rho)-E^\lambda_F(\rho)\right| }{\sup_{\rho\in \C^A_d}E_F(\rho)}=\frac{\sup_{\rho\in \C^A_d}\left|E_F(\rho)-E^\lambda_F(\rho)\right| }{\ln d},
\end{equation*}
where we assumed that $\,\dim\HH_B\geq d$.  So, the accuracy of the bound (\ref{ub1}) is characterized by
\begin{equation}\label{rb1}
\widehat{\rm re}_{d}(\lambda)\doteq\frac{\textrm{the r.h.s. of }(\ref{ub1})}{\ln d}.
\end{equation}

The result of numerical calculations of $\,\widehat{\rm re}_{d}(\lambda)\,$ as a function of $\lambda$ for different values of $d$ are shown on Fig.1. We see that
the relative error bound slightly depends on $d$.
\end{example}\smallskip

\begin{example}\label{2-exam}
It seems reasonable to characterize the accuracy of the uniform approximation of the EoF by the function $E^\lambda_F$ on the set $\C^A_{H,E}$ by the relative error
\begin{equation*}
{\rm re}_{H,E}(\lambda)\doteq\frac{\sup_{\rho\in \C^A_{H,E}}\left|E_F(\rho)-E^\lambda_F(\rho)\right| }{\sup_{\rho\in \C^A_{H,E}}E_F(\rho)}=\frac{\sup_{\rho\in \C^A_{H,E}}\left|E_F(\rho)-E^\lambda_F(\rho)\right| }{F_H(E)},
\end{equation*}
where  the function $F_H$ defined in (\ref{F-def}) appears because we assumed that $\,\dim\HH_B=\infty$. So, the accuracy of the bound (\ref{ub2}) is characterized by
\begin{equation}\label{rb2}
\widehat{\rm re}_{H,E}(\lambda)\doteq\frac{\textrm{the r.h.s. of }(\ref{ub2})}{F_H(E)}.
\end{equation}

To obtain concrete estimates assume that $\,H=N\doteq a^\dag a\,$ is the number operator of a quantum oscillator  (see \cite[Section 12]{H-SCI}). In this case
$F_{N}(E)=g(E)$, where $g(x)$ is the function defined (\ref{g-def}). The result of numerical calculations of $\,\widehat{\rm re}_{N,E}(\lambda)\,$ as a function of $\lambda$ for different values of $E$ are shown on Fig.2. We see that the relative error bound is maximal for $E=1$ and slightly depends on $E$ for large $E$.

\begin{figure}[t]

\centering
\begin{center}

\includegraphics[scale=0.4, bb=400  450 500 550]{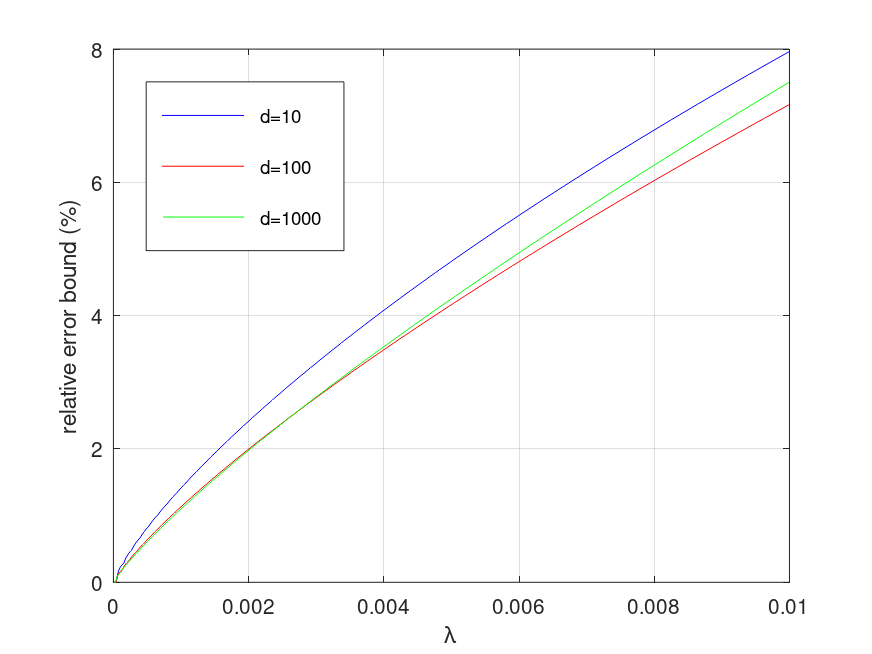}

\vspace{180pt}
\caption{The percentage value of $\,\widehat{\rm re}_{d}(\lambda)\,$ as a function of $\lambda$ for $d=10,100,1000$.}
\end{center}
\label{Fig1}
\end{figure}

\begin{figure}[t]

\centering
\begin{center}

\includegraphics[scale=0.4, bb=400 450 500 550]{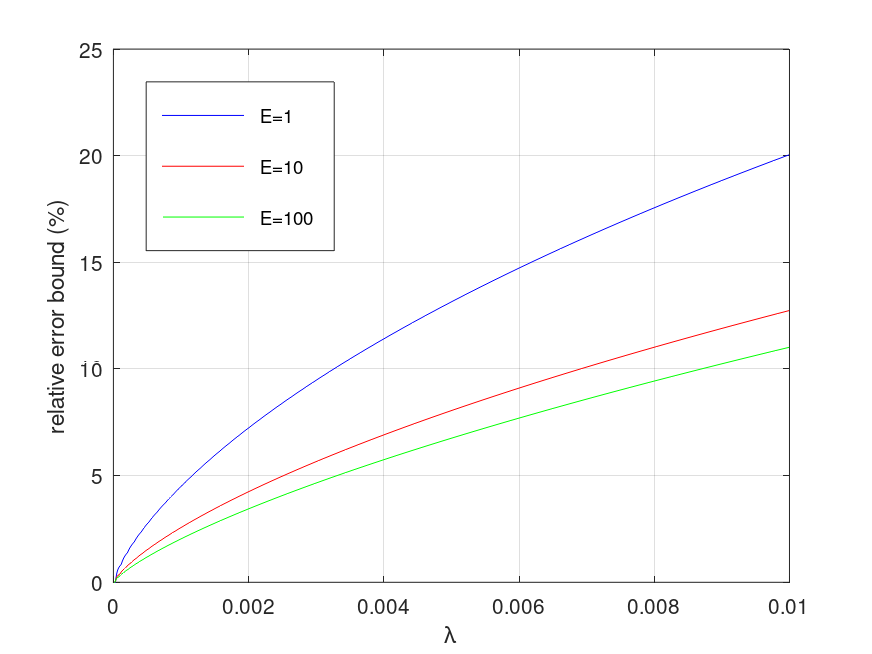}

\vspace{180pt}
\caption{The percentage value of $\,\widehat{\rm re}_{N,E}(\lambda)\,$ as a function of $\lambda$ for $E=1,10,100$.}
\end{center}

\label{Fig2}
\end{figure}
\end{example}

\section{The cases when $\,E_F(\rho)=E^\lambda_F(\rho)\,$ for small $\lambda>0$}

According to Theorem  \ref{P-1}D,  the equality $\,E_F^\lambda(\rho)=E_F(\rho)\,$ holds for given $\rho$ and $\lambda=\lambda_\rho>0$ (and hence for all $\,\lambda\in (0,\lambda_\rho]\,$) if and only if
\begin{equation}\label{lsc++}
E_F(\rho)-E_F(\sigma)\leq \frac{1}{2\lambda_\rho}\|\rho-\sigma\|_1\quad \forall\sigma\in\St(\mathcal{H}_{AB}).
\end{equation}
By Theorem 1 in \cite{H&Sh-5} this is equivalent to
the existence of an operator $\Lambda_\rho$ with the spectral diameter  $\,D(\Lambda_\rho)\leq\frac{1}{\lambda_\rho}\,$ such that,
\[
   E_F(\rho)
   =\operatorname{Tr}\Lambda_\rho\rho
   \quad\textrm{ and }\quad
   \langle\psi|\Lambda_\rho|\psi\rangle
   \leq S(\psi_A)
   \quad
   \forall\,\psi\in\mathcal{H}^1_{AB},
 \]
where $\mathcal{H}^1_{AB}$ is the unit sphere in $\mathcal{H}_{AB}$. The last property means
that  $\,\ell_\rho(\sigma)=\Tr\Lambda_\rho \sigma\,$  is an affine  supporting functional for the EoF
at the state $\rho$, i.e.
$$
\ell_\rho(\rho)=E_F(\rho)\quad\textrm{ and }\quad\ell_\rho(\sigma)\leq E_F(\sigma)\;\;\textrm{ for any state }\,\sigma\,\textrm{ in }\,\St(\HH_{AB}).
$$
It is shown in \cite{H&Sh-5}  that in every bipartite quantum system $AB$ there exist states
for which the above equivalent properties do not hold. So,  in every bipartite  quantum system there exist states $\rho$ for which
$$
E^\lambda_F(\rho)<E_F(\rho)\quad\textrm{ for all }\,\lambda>0.
$$
So, the convergence of $E^\lambda_F(\rho)$ to $E_F(\rho)$ as $\lambda\to0^+$ at these states is asymptotical.\smallskip

At the same time, several sufficient conditions for
the validity of (\ref{lsc++}) for a given state $\rho$ are found in \cite{H&Sh-5} and \cite{H&Sh-6}. By Theorem  \ref{P-1}D, these
conditions also imply
\begin{equation}\label{equ}
E_F^\lambda(\rho)=E_F(\rho)\quad  \forall \lambda\in (0,\lambda_\rho]
\end{equation}
for some $\lambda_\rho>0$ (depending on $\rho$).\smallskip

Below we briefly describe the cases when (\ref{equ}) holds. In
all the cases we assume that $A$ and $B$ are quantum systems of any dimensions.\smallskip

\textbf{Case A.}  Let $\rho$ be a pure state of the system $AB$ such that the marginal states
$\rho_A$ and $\rho_B$ has the spectrum
\begin{equation}\label{ud}
\Bigl(\underbrace{\frac{1}{d},\frac{1}{d},...,\frac{1}{d}}_{d\textrm{ items}},0,0...\Bigr),\quad d\geq 2.
\end{equation}

By Proposition 2 in \cite{H&Sh-6} the semicontinuity bound  (\ref{lsc++}) holds for all $\,\lambda\in (0,\frac{1}{K_d}]$, where
\begin{equation*}
K_d\doteq
\begin{cases}
2 & \textup{if }\; d=2,\\[1mm]
\dfrac{d}{d-2}\ln(d-1) & \textup{if }\;d>2,
\end{cases}
\end{equation*}
is the optimal constant in the log-Sobolev inequality for the uniform measure on the complete graph  with $d$ vertices \cite{LS-1,LS-2,H&Sh-6}.
It follows that (\ref{equ}) holds with $\lambda_\rho=\frac{1}{K_d}$ in this case.

It is somewhat surprising that, at the moment, we cannot prove or disprove the validity of (\ref{equ}) with some $\,\lambda_\rho>0\,$
even for all pure states of the simplest 2-qubit system (see the end of the Introduction in \cite{H&Sh-6}).\smallskip

\textbf{Case B.}  Let $\rho$ be a finite rank state of the system $AB$ such that
$$
2\leq d\doteq\min\{\rank \rho_A,\rank\rho_B\}<+\infty\quad\textrm{ and }\quad\supp\rho=\supp\rho_A\otimes\supp\rho_B.
$$
By Proposition 7A in \cite{H&Sh-5} the semicontinuity bound  (\ref{lsc++}) holds for all $\,\lambda\in (0,\frac{1}{C^r_\rho}]$, where
$$
C^r_\rho\doteq\frac{\log d-E_F(\rho)}{\lambda_{\min}^{\rho}},\quad  \lambda_{\min}^{\rho}\textup{ is the minimal positive eigenvalue of }\rho.
$$
It follows that (\ref{equ}) holds with $\lambda_\rho=\frac{1}{C^r_\rho}$ in this case.\smallskip

\textbf{Case C.}  Let $\rho$ be a finite rank state of the system $AB$ such that
$$
0<\rm{me}(\rho)\doteq\min\{S(\rho_A),S(\rho_B)\}<+\infty\quad\textrm{ and }\quad\supp\rho=\supp\rho_A\otimes\supp\rho_B.
$$
By Proposition 7B in \cite{H&Sh-5} the semicontinuity bound  (\ref{lsc++}) holds for all $\,\lambda\in (0,\frac{1}{C^s_\rho}]$, where
$$
C^s_\rho\doteq\left[
\frac{\rm{me}(\rho)}
{(\lambda_{\min}^{\rho})^2}
-
\frac{E_F(\rho)}
{\lambda_{\min}^{\rho}}
\right],\quad  \lambda_{\min}^{\rho}\textup{ is the minimal positive eigenvalue of }\rho.
$$
It follows that (\ref{equ}) holds with $\lambda_\rho=\frac{1}{C^s_\rho}$ in this case.

\smallskip

\textbf{Note:} The condition  $\,\supp\rho=\supp\rho_A\otimes\supp\rho_B\,$ in cases
$B$ and $C$ is essential. This can be shown by the example of a state of the 2-qubit system at which the EoF has no
 affine  supporting functional presented in \cite{H&Sh-5}.
\bigskip

Using the bound in Example \ref{new} in Section 4 one can obtain upper bounds on the
difference $\,E_F(\rho)-E^{\lambda}_F(\rho)\,$  in  all the above cases A,B and C for $\lambda>\lambda_\rho$, which are more accurate than
the universal bounds given by Proposition \ref{P-2}.

\section{Summary of the results}

In the article we have defined and analyzed  the family $\{E^{\lambda}_F\}_{\lambda>0}$ of functions on the set $\St(\HH_{AB})$ (called Moreau-Yosida approximation of the EoF) such
$$
E^{\lambda}_F(\rho)\nearrow E_F(\rho)\leq +\infty\quad\textrm{ as }\; \lambda\searrow0^+\;\textrm{ for all }\;\rho\in\St(\HH_{AB}).
$$
Depending on the state $\rho$, there are two types of the convergence of $E^{\lambda}_F(\rho)$ to $E_F(\rho)$:
\begin{itemize}
  \item there exists $\lambda_{\rho}>0$ such that $E^{\lambda}_F(\rho)=E_F(\rho)$ for all $\lambda\in(0,\lambda_{\rho}]$ (this holds if and only if the equivalent properties
  in Theorem 1 in \cite{H&Sh-5} are valid for the state $\rho$);
  \item $E^{\lambda}_F(\rho)$ tends to $E_F(\rho)$ asymptotically as $\,\lambda\to0^+$.
\end{itemize}
In every nontrivial bipartite quantum system (in particular, in the simplest 2-qubit system) there exist states $\rho$ for which each of these cases takes place. \medskip

We have shown that for every $\lambda>0$
\begin{itemize}
  \item the function $E^{\lambda}_F$ is convex and  Lipschitz continuous on $\St(\HH_{AB})$ (with the Lipschitz constant not exceeding $\frac{1}{\lambda}$),
  \item the  equality  $\,E^\lambda_F(\rho)=0\,$  holds if and only if $\rho$ is a separable  state in $\St(\HH_{AB})$,
  \item the function $E^{\lambda}_F$  does not increase under nonselective
LOCC-operations \cite{{Vidal,P&V}},
  \item the function $E^{\lambda}_F$ is subadditive under tensor products,
  \item the definition of the function $E^\lambda_F$ is invariant with respect to the embeddings $\,\HH_A\subset\HH_{A'}$ and $\,\HH_B\subset\HH_{B'}$,
  \item the function $E^{\lambda}_F$ can be defined via the optimization over discrete ensembles of pure states only,
  \item the function $E^{\lambda}_F$ can be defined by the variational expression  which can be interpreted as a "truncation" of the well-known variational expression for the EoF.
  \end{itemize}
\medskip

To a large extent, the potential applicability of the constructed MY-approximations of the EoF  is based on the fact that
for all states except those for which both marginal entropies are infinite  there are easily calculated estimates of the accuracy of these approximations, i.e. upper bounds
on the difference
$$
E_F(\rho)-E^{\lambda}_F(\rho)
$$
tending to zero as  $\,\lambda\to0^+$ and depending on simple parameters of the state $\rho$ (rank/energy of one of the marginal states of $\rho$).
In a sense, these bounds reduce the problem of calculation of the EoF  with a given accuracy to the problem of
calculation of the function $E_F^{\lambda}$ for sufficiently small $\lambda>0$\bigskip

\section{Open question: can the function $E^{\lambda}_F$  increase under selective
LOCC-operations?}

\centerline{(this section is written with the help of ChatGPT-5.6)}
\bigskip

It is proved in Section 3 (Corollary 1) that the function $E^{\lambda}_F$  does not increase under nonselective
LOCC-operations  for every $\lambda>0$.  Unfortunately, the (quite simple) arguments  used to prove this property are not generalized to
selective LOCC-operations. At the same time, all the attempts of ChatGPT-5.6 to find a counterexample were unsuccessful. So, \emph{we may conjecture, at the moment,
that the function $E^{\lambda}_F$ does not increase under selective
LOCC-operations} as well. ChatGPT-5.6 tried (so far unsuccessfully) to use different ways to prove this conjecture. One of them was to apply the variational expression
(\ref{VE}) for the function $E^{\lambda}_F$  (generalizing the proof of the nonselective LOCC-monotonicity presented at the end of Section 3.2).

We briefly describe one of these ways (which is the most promising by the opinion of ChatGPT-5.6).

By Theorem 3 in \cite{MH}  to prove selective LOCC-monotonicity of the function $E^{\lambda}_F$ (for a given $\,\lambda>0\,$) it suffices\footnote{The convexity of $E^{\lambda}_F$ follows from Theorem \ref{P-1}B, the invariance of $E^{\lambda}_F$  under local unitaries and under adding local ancillary systems follows from Theorem \ref{P-1}F applied to the corresponding local channels and their inverses. Hence, by Theorem 3 in [10], it remains to prove (\ref{FL}).} to show that
\begin{equation}\label{FL}
E_{F}^{\lambda}
\left(
\sum_i p_i\,\rho_i\otimes |i\rangle\langle i|_F
\right)
=
\sum_i p_i E_{F}^{\lambda}(\rho_i)
\end{equation}
for any ensemble $\{p_i,\rho_i\}_i$ of states in  $\St(\mathcal H_{AB})$, where \(\{|i\rangle_F\}_i\) is an orthonormal basis in a local flag system $F=A',B'$.

The inequality $"\le"$ in  (\ref{FL}) follows immediately from  the convexity of the function $E_{F}^{\lambda}$,
because it is easy to see that $\,E_{F}^{\lambda}(\rho_i\otimes |i\rangle\langle i|_F)=E_{F}^{\lambda}(\rho_i)\,$ (this can be done using the definition of $E_{F}^{\lambda}$ and the non-increasing
of $E_{F}^{\lambda}$ under local channels).

Thus, the nontrivial part is the reverse
inequality
\begin{equation*}
E_{F}^{\lambda}(\omega)
\ge
\sum_i p_i E_{F}^{\lambda}(\rho_i),\quad \omega=\sum_i p_i
\rho_i\otimes |i\rangle\langle i|_F.
\end{equation*}

A sufficient condition for proving (\ref{FL})  would be to show
that the infimum in the definition
$$
E_{F}^{\lambda}(\omega)
=
\inf_{\sigma}
\left\{
E_{F}(\sigma)
+
\frac{1}{2\lambda}\|\omega-\sigma\|_1
\right\}
$$
can be restricted to states of the form
$$
\sigma
=
\sum_i p_i\,\sigma_i\otimes |i\rangle\langle i|_F
$$
with the same  probabilities $p_i$. For such states one has
$$
E_{F}(\sigma)
=
\sum_i p_i E_{F}(\sigma_i)
$$
and
$$
\|\omega-\sigma\|_1
=
\sum_i p_i\|\rho_i-\sigma_i\|_1.
$$

Consequently,
$$
E_{F}^{\lambda}(\omega)
=
\inf_{\{\sigma_i\}}
\sum_i p_i
\left[
E_{F}(\sigma_i)
+
\frac{1}{2\lambda}
\|\rho_i-\sigma_i\|_1
\right]=\sum_i p_i E_{F}^{\lambda}(\rho_i).
$$
Therefore, the essential missing step is to prove that an optimizer,
or at least a minimizing sequence, may be chosen to preserve the local
classical flag and its probability distribution.
\smallskip

ChatGPT-5.6 made a detail investigation of the validity of equality (\ref{FL}) for the simplest 2-qubit system and proved that it holds
for several classes of the states  $\,\omega=\sum_i p_i
\rho_i\otimes |i\rangle\langle i|_F\,$ (and showed that for these states the optimizer has the form described before!),  but a general rigorous proof was not presented. At the same time, no counterexamples have been found.
\smallskip

Confirmation of the  hypothesis of selective LOCC-monotonicity of the function $E^{\lambda}_F$ would imply
that this function is a true entanglement monotone (in terms of \cite{Vidal,P&V,4H}).\footnote{I would be grateful for any comments concerning this question.}

\bigskip \bigskip

I am grateful to A.S.Holevo for the collaboration that motivated this research.

\end{document}